\documentclass{article}

\usepackage[preprint]{neurips_2025}

\makeatletter
\renewcommand{\@noticestring}{}
\makeatother

\usepackage[utf8]{inputenc} 
\usepackage[T1]{fontenc}    
\usepackage{hyperref}       
\hypersetup{hidelinks}

\usepackage{url}            
\usepackage{booktabs}       
\usepackage{amsfonts}       
\usepackage{nicefrac}       
\usepackage{microtype}      
\usepackage{xcolor}         

\usepackage{graphicx}
\usepackage{multirow}
\usepackage{amsmath}
\usepackage{amsthm}
\usepackage{mathrsfs}
\usepackage[title]{appendix}
\usepackage{textcomp}
\usepackage{manyfoot}
\usepackage{algorithm}
\usepackage{algorithmicx}
\usepackage{algpseudocode}
\usepackage{listings}
\usepackage{float}

\definecolor{codebg}{HTML}{F2F2F2}
\definecolor{coderule}{HTML}{C8C8C8}
\lstdefinestyle{paperpy}{
  language=Python,
  basicstyle=\ttfamily\footnotesize,
  keywordstyle=\bfseries,
  commentstyle=\itshape,
  stringstyle=\relax,
  identifierstyle=\relax,
  backgroundcolor=\color{codebg},
  frame=single,
  rulecolor=\color{coderule},
  framerule=0.5pt,
  framesep=4pt,
  xleftmargin=8pt,
  xrightmargin=8pt,
  showstringspaces=false,
  breaklines=true,
  breakatwhitespace=true,
  columns=fullflexible,
  keepspaces=true,
  tabsize=2,
  aboveskip=0.7em,
  belowskip=0.7em,
  captionpos=b,
  numbers=none
}
\newcommand{\R}{\mathbb{R}}

\renewcommand{\SS}{\mathcal{S}}
\newcommand{\eps}{\varepsilon}

\definecolor{foadgreen}{RGB}{30,150,0}

\newif\ifrevisions
\revisionsfalse
\ifrevisions
  
\else
  
\fi

\providecommand{\Description}[1]{}

\newtheorem{theorem}{Theorem}

\theoremstyle{definition}

\title{Sampling for Region-Aggregated Spatial Scan Statistics}

\author{%
  Foad Namjoo \\
  University of Utah \\
  \texttt{foad.namjoo@utah.edu} \\
  \And
  Drew McClelland \\
  Taptap Send \\
  \texttt{drewmac6191@gmail.com} \\
  \And
  Michael Matheny \\
  Meta \\
  \texttt{michaelmathen@gmail.com} \\
  \And
  Jeff M. Phillips \\
  University of Utah \\
  \texttt{jeffp@cs.utah.edu} \\
}

\makeatletter
\g@addto@macro\@floatboxreset\centering
\makeatother

\begin{document}

\maketitle

\begin{center}\small
Accepted at ACM SIGSPATIAL 2026.
\end{center}

\begin{abstract}

Anomaly detection in geospatial data is a crucial tool in geographic information science (GIS), with applications ranging from national security to public-health surveillance to the study of societal disparities.  This work focuses on spatial scan statistics and addresses a key mismatch:
spatial counts are typically aggregated into predefined regions (census tracts, zip codes, counties), whereas the most efficient scan algorithms operate on spatial point data.
The standard remedy---collapsing each region to its centroid, as in widely used tools such as SaTScan---is convenient but, as we show, discards the region's spatial extent and causes a significant loss in statistical power.
To resolve this, we propose a simple yet scalable fix: replace each spatial region with 20--50 points sampled uniformly from its geometry,
and divide the region's measured and baseline counts evenly among them.
This approach improves statistical power while maintaining computational tractability. 
A convergence analysis explains why so few samples per region suffice. We recommend this sampling-based conversion as the default way to apply point-based spatial scan statistics to region-aggregated data for anomaly detection.
\footnote{Thanks to NSF CCF-2115677, IIS-1816149, and IIS-2311954.}

\end{abstract}

\section{Introduction}\label{sec1}

Scan statistics~\citep{kulldorff1997spatial,xie2022statistically,abolhassani2021up,glaz2024handbook} 
are a fundamental tool in spatial data analysis. They aim to identify spatial regions where a measured value is significantly different from what would be expected based on a baseline distribution.  The statistic evaluates this by ``scanning" all candidate spatial regions to detect the most anomalous ones.  
There are many variants of spatial scan statistics~\citep{NM04,NMC06,patil2004upper,Kulldorff2006,han2019kernel}, and they have been successfully used in various applications such as detecting elevated breast cancer rates~\citep{Kulldorff2006}, monitoring emerging public health threats~\citep{nobles2022presyndromic,desjardins2020rapid}, and analyzing spatial crime patterns \citep{shiode2011street}.   These tools are central to anomaly detection workflows in GIS and public health surveillance.

Scan statistics can be applied to spatial data provided in two formats.  The first format represents data as a set of geolocated points $X \subset \mathbb{R}^2$; such as home addresses or GPS-tracked locations.  The most common software for these tasks operates on an input of this form, such as SaTScan~\citep{Kul7.0}.  
The second common format aggregates the data into predefined regions (e.g., census tracts, zip codes, counties)~\citep{xie2022statistically,tango2005flexibly}.  The baseline and measured values can then be accumulated for each region. These aggregations may be necessary due to privacy constraints (an individual's data can be less easily identified if it has been region aggregated) or because data are only available at a coarse spatial resolution. 

To apply point-based algorithms to region-aggregated data, it is common practice to convert each region into a single point using its centroid. Although computationally convenient, this modeling choice introduces approximation errors and can significantly reduce the \emph{statistical power}---the ability to reliably detect a true anomaly.  
We make this precise in Section~\ref{sec:exp}.
In particular, it does not capture the spatial diversity and extent of the underlying region, leading to weaker anomaly detection.

This paper addresses the above limitation and proposes a robust yet scalable alternative.  \emph{The proposed solution is \underline{simple}: \textbf{instead of representing a region by one centroid point, we sample multiple points (typically 20 to 50) in each region
and divide the region's measured and baseline counts uniformly across those points.}} 
We show that this approach significantly increases the statistical power of these methods under planted region experiments and demonstrate its scalability on datasets at local, state, and national levels.
\footnote{Code and rendering scripts: \url{https://github.com/foadnamjoo/sampling-region-scan}}
By bridging region-aggregated data with efficient point-based algorithms, our method improves anomaly detection accuracy and strengthens spatial decision-making reliability.

\section{Preliminaries}



This section reviews both the classical point-based formulation and the challenges that arise when extending it to region-based settings.

\paragraph*{Point-Based Spatial Scan Statistics}
In the point-based setting, spatial scan statistics operate on a set of spatial points $X \subset \R^2$, where each point $x \in X$ has an associated baseline value $b(x)$,  which represents the expected background level of the measured phenomenon at location $x$ (e.g., population or expected case count) and a measured value $m(x)$ representing the observed quantity of interest (e.g., people with cancer).  The goal is to identify regions where the measured values deviate from the baseline, using a family of geometric shapes $\SS$ and a cost function $\phi$.

The family $\SS$ is typically defined by a geometric shape.  Common choices are disks (interiors of circles) or (axis-aligned) rectangles. Each shape $S \in \SS$  induces a subset $Y = X \cap S$ of points, and the scan statistic evaluates each induced subset $Y$ using the anomaly score function $\phi(Y)$.

A widely used cost function  $\phi(Y)$, is derived under a Poisson model~\citep{kulldorff1997spatial} for the distribution of measured values $m(x)$ given a baseline value $b(x)$.  An anomalous subset $Y \subset X$ is one that would be well modeled by a different (often larger) rate of measured values than $X \setminus Y$.  The final anomaly score $\phi(Y)$ is derived as the log-likelihood ratio between the best model with a different rate for $Y$ and $X \setminus Y$ and for the best model where the rate is the same for all $X$:
\[
\phi(Y) = m(Y) \log \frac{m(Y)}{b(Y)} + (1 - m(Y)) \log \frac{1 - m(Y)}{1 - b(Y)},
\]

where, $m(Y) = \frac{1}{M} \sum_{x \in Y} m(x)$ and $b(Y) = \frac{1}{B} \sum_{x \in Y} b(x)$, where $M = \sum_{x \in X} m(x)$ and $B = \sum_{x \in X} b(x)$. Although other cost functions exist~\citep{Kul7.0,agarwal2006spatial} (e.g., under Bernoulli or Gaussian assumptions), we adopt the Poisson model due to its widespread use and compatibility with both point and region-based inputs.



The final step of spatial scan statistics is to identify the shape $S$ that (approximately) maximizes $\phi(X \cap S)$.  This defines the spatial region that is most anomalous, and the statistic itself is   
\[
\Phi_{\SS}(X,m,b) = \max_{S \in \SS} \phi(X \cap S).
\]
While there are infinite number of geometric shapes that we could scan over, we only need to consider a finite number of them.  
This is because the function $\phi$ only depends on which points are contained in the shape, so the number of shapes considered is bounded as a function of the number of input points $n = |X|$.  Although there is an exponential number of subsets $2^n$, restricting to geometric shapes reduces this to a polynomial number: at most $n^3$ for disks and at most $n^4$ for rectangles.  
Brute-force enumeration of these candidates, spending $O(n)$ to
score each one, therefore takes $O(n^4)$ and $O(n^5)$ time for
disks and rectangles, respectively; 
sophisticated algorithms reduce this to near-quadratic $O(n^2 \log n)$ for rectangle discrepancy~\citep{APV06}, which the \texttt{pyScan} implementation we use in Section~\ref{sec:exp} exploits
(see Section~\ref{sec:runtime} for measured runtimes).


\paragraph*{Region-Based Spatial Scan Statistics}
In many real-world datasets, data are aggregated over spatial regions (e.g., counties), rather than available at individual locations. Let $Z$ be a collection of regions, with each region $z \in Z$ associated with the baseline and measured values $b(z)$ and $m(z)$, respectively. 
For any subset $Y \subset Z$, we define $m(Y) = \frac{1}{M}\sum_{z \in Y} m(z)$ and $b(Y) = \frac{1}{B}\sum_{z \in Y} b(z)$, where $M = \sum_{z \in Z} m(z)$ and $B = \sum_{z \in Z} b(z)$; the same cost function $\phi(Y)$ then applies unchanged.


The key challenge is to define the candidate subsets $Y$ (regions of interest).  While a potential subset could be any connected set of regions, as in FlexScan~\citep{tango2005flexibly,FleXScan}, this is not tied to geometric shapes, and can potentially over-fit to strangely shaped regions.  Also, it can be significantly slower than other scanning algorithms~\citep{grubesic}; also see Table~\ref{tab:runtime_usa+AR} in Section \ref{sec:runtime}.  

A more common strategy is to use geometric shapes $\SS$ (e.g., disks, rectangles) as before, but apply them to regions. This presents two main obstacles:
First, in the point-based setting, one can identify a canonical shape from a valid subset $Y \subset \R^2$ as the \emph{smallest} shape $S_Y \in \SS$ that contains $Y$ (and nothing from $X \setminus Y$).  Given a set of points $Y$ and $\SS$ as disks or rectangles, the smallest shape can be found easily with computational geometry.  However, with region-based input, where each $z \in Z$ has a complex geometry (e.g., a polygon in a shapefile), the computation of the minimal enclosing shape is more difficult.
Second, geometric shapes $S \in \SS$ are likely to be unable to partition $Y$ from $Z \setminus Y$, meaning that they may partially overlap with regions $z \in Z$. It makes it more difficult to cleanly partition $Z$ into $Y$ and $Z \setminus Y$.

It is not immediately clear how to handle these in defining $b(Y)$ and $m(Y)$, at least not efficiently.  
One approach is the 
area-based framework of \cite{buchin2012}. They model the map as a polygonal subdivision with per-region case and population counts, and weight a regions contribution towards $\phi$ proportional to its overlap.  
Their method, however, only considers a fixed-scale polygonal window.   
They discretize the continuous placement problem by constructing the arrangement of combinatorially distinct placements of the polygonal shape with respect to the subdivision, and then optimize the likelihood within each cell.  
In experiments on real geography with synthetic cases, their area-based methods localized clusters more accurately than centroid-based baselines.  They also introduced a non-homogeneous variant that showed lower detection power. In practice, circular windows are approximated by regular polygons.  
A limitation of their method is the fixed scale and aspect ratio (for rectangles) of anomalous regions they consider.  

However, the most widely used heuristic to resolve these difficulties is to simply represent each region $z \in Z$ at a single point $x_z$ at its centroid, assigning $b(x_z) = b(z)$ and $m(x_z) = m(z)$. This allows for direct application of point-based scan algorithms. However, this approximation introduces geometric inaccuracies: a shape $S$ containing $x_z$ may only partially intersect $z$, leading to a possible misalignment between the shape and the underlying region. This approach \emph{ignores} this potential error.


\subsection{Software for Spatial Scan Statistics}
\label{sec:software}

Several software tools implement the spatial scan statistic, each with trade-offs on speed, flexibility, and accuracy in detecting regions.

Probably the oldest and most widely used implementation is SaTScan~\citep{Kul7.0}, which supports both point-based and region-based data.  For regions, it reduces the regions to single points—typically centroids—and scans over circular or elliptical windows to maximize $\Phi_{\SS}$.   It offers a variety of statistical models, as well as searching for space-time, purely temporal, and seasonally-defined anomalies.  

The pyScan~\citep{pyScan} offers a modern Python interface with significantly improved scalability. It relies on a well-engineered C++ backend with algorithmic enhancements for computing spatial scan statistics efficiently. Additionally, pyScan allows for guaranteed approximations of the optimal score, trading small controlled loss in accuracy for substantial gains in speed. In this work, we  enable a mild approximation in pyScan; see Section \ref{sec:exp}.  


Another option is FlexScan~\citep{FleXScan}, which operates directly on a region-based input.  It performs an exhaustive search over combinations of spatially connected regions to identify clusters.  FlexScan allows for more flexibility in cluster shape, but several previous studies have highlighted critical limitations.  \cite{tango2005flexibly} noted that FlexScan has high computational overhead, especially when applied to datasets with a large number of spatial regions; an efficient algorithm is needed.
Moreover, it is limited to clusters of at most 15 regions; if the anomalous area is larger than that, it reports multiple disjoint regions without indicating if they are contiguous or not.  In the large cluster case, it does not report a single region or single significance score.  
We also found that FlexScan requires more manual pre- and post-processing; for instance, if the dataset contained unconnected regions, it failed to report any clusters unless these were first removed.  



We also obtained a Java implementation (JDK~17) of the area-based method of \cite{buchin2012} by contacting the authors.
The method scans a window on a \emph{fixed} scale, as required by the code.  
Because the spatial extent of an anomaly is a key aspect of what one is trying to discover, this is a strong assumption.  
The paper recommends trying several scales (using $9$ multiples $\{0.5, 0.7, 0.85, 0.95, 1.0,$ $1.05, 1.15, 1.3, 1.5\}$ of the guess -- or given the true planted size) retaining the highest scoring anomaly across all scales, and we follow this in our experiments.  
The true size would not be known by a practitioner. By contrast, the scan methods of SaTScan and pyScan that our method extends are able to consider all scales from some shape family intrinsically.  

\section{Methods}

To address the challenge of converting region-aggregated spatial data to be compatible with point-based scan statistics, we propose a principled alternative to the common centroid approximation.

\paragraph*{Baseline: Centroid Method}  
In the centroid-based approach, each region $z$ is represented by a single point $x_z$ located at its geometric centroid. The entire measured and baseline values of the region are then assigned to that point, i.e. $m(x_z) = m(z)$ and $b(x_z) = b(z)$. Although computationally efficient, this reduction discards the internal spatial structure of the region, potentially diminishing statistical power and introducing bias.


\paragraph*{Our Proposed Sampling-Based Method}  
Instead of using one centroid point per region, we propose replacing each region $z$ with a set of $k$ representative points: $x_{z,1}, \ldots, x_{z,k}$. These points are sampled uniformly at random from within the geometry of $z$. The region’s values are then evenly distributed across these points, so that
\[
m(x_{z,j}) = \frac{m(z)}{k}, \quad b(x_{z,j}) = \frac{b(z)}{k}, \quad \text{for all } j = 1, \ldots, k.
\] 

This strategy preserves both the spatial extent and the internal structure of the original region while remaining compatible with existing point-based scan algorithms.





\begin{figure}[htbp]
  \includegraphics[width=0.9\linewidth]{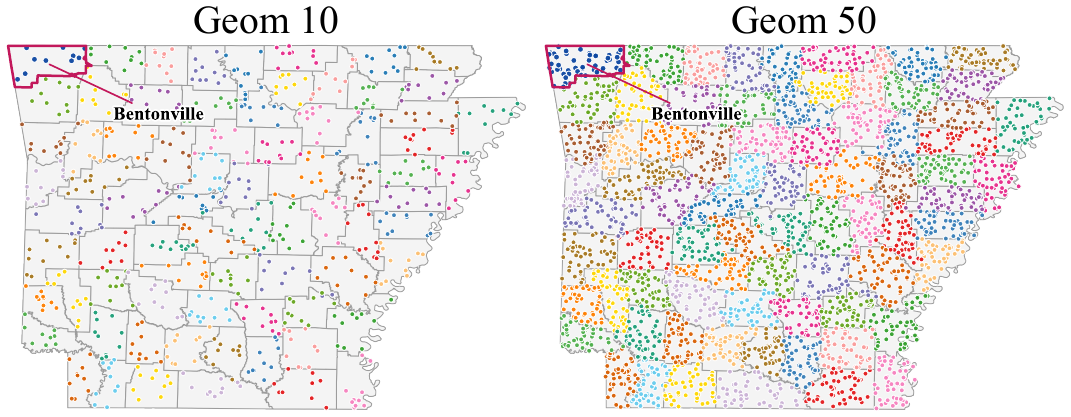}
  \Description{Two side-by-side maps of Arkansas with county polygon
boundaries drawn in light gray. Each county is filled with small
colored dots representing sampled points; counties are color-coded
distinctly so adjacent counties are visually separable. Benton
County in the northwest corner is outlined in magenta and labeled. The
left map (Geom 10) shows roughly ten dots per county; the right map
(Geom 50) shows roughly fifty dots per county at the same scale.}

  \vspace{-2mm}
  \caption{Region-to-point sampling on Arkansas counties: each county
  is replaced by $k$ points drawn uniformly at random from its
  polygon. For example Benton County (outlined), contains exactly $10$ points in
  \textsf{Geom 10} (left) and exactly $50$ points in \textsf{Geom 50}
  (right).}
  \label{fig:PointsamplingArkansas}
\end{figure}

While we explore other variants,
we find that random selection of these $k$ points in each region is effective. 
We explore the values of $k$ 
from $1$ (a single uniformly sampled point, i.e.\ \textsf{Random Point}; also the deterministic \textsf{Centroid}) to $50$.   Section~\ref{sec:k-choice} studies this choice empirically, finding diminishing returns beyond $k \approx 20$. 
Larger values of $k$ tend to produce higher statistical power, as they better represent the geometry of each region and, in the synthetic experiments, provide more observations of the planted signal. We find that while the results improve with larger $k$, the full $k=50$ points are not always needed. The optimal choice may depend on the complexity of the dataset and computational constraints. Figure~\ref{fig:PointsamplingArkansas} shows an example of this sampling-based conversion applied to counties in the state of Arkansas. The left panel uses $k = 10$ points per region, while the right panel uses $k = 50$.

\subsection{Recovering Planted Anomalies}


To evaluate the statistical power of the point-based methods, we consider their ability to recover anomalous regions that we synthetically plant in the data. We generate a planted shape $S\in\mathcal{S}$ and construct each point-based input: Centroid places one point at each region's geometric centroid, and Geom-$k$ samples locations uniformly within each region. Conditional on these locations, we independently include each point in the measured set with probability $p$ if inside $S$ and probability $q$ if outside it. We then pass the measured and baseline point sets to a scanning algorithm such as pyScan, which maximizes the Poisson-derived Kulldorff score, and assess recovery using the discovered region $\hat S$.
Our experiments primarily use planted rectangles and rectangular scan windows.

We evaluated how well algorithms work by how closely the identified anomalous region $\hat S$ (the \emph{Discovered Rectangle}) matches the planted one $S$ (the \emph{Target Rectangle}).  We do not expect these to match exactly due to spatial rounding errors and since each point is independently included in the measured set probabilistically.  As rates $p$ and $q$ become similar, the most anomalous rectangle from the generated data becomes less distinct from the generating one.  

To quantify the similarity between $S$ and $\hat S$, we use a modified Jaccard distance defined on a large set of fixed points $A$. 
We construct the fixed evaluation set $A$ by independently sampling $500$ points uniformly within each input region, so $|A|=500n$, where $n$ is the number of regions. Using a separate fixed seed, we build $A$ once per dataset and reuse it across all methods, values of $k$, rate contrasts, trials, and experiment seeds. This dense sampling ensures robust, reproducible overlap comparisons.

We define \emph{Point Jaccard distance} as the following:
\[
  d_{\textsc{Jac},A}(S, \hat S) = 1 - \frac{|A \cap (S \cap \hat S)|}{|A \cap (S \cup \hat S)|}.  
\] 
This distance is one minus the fraction of evaluation points in $S\cup\hat S$ that also lie in $S\cap\hat S$; lower values indicate greater overlap. 

\begin{figure}[b]
    \includegraphics[width=1.0\linewidth]{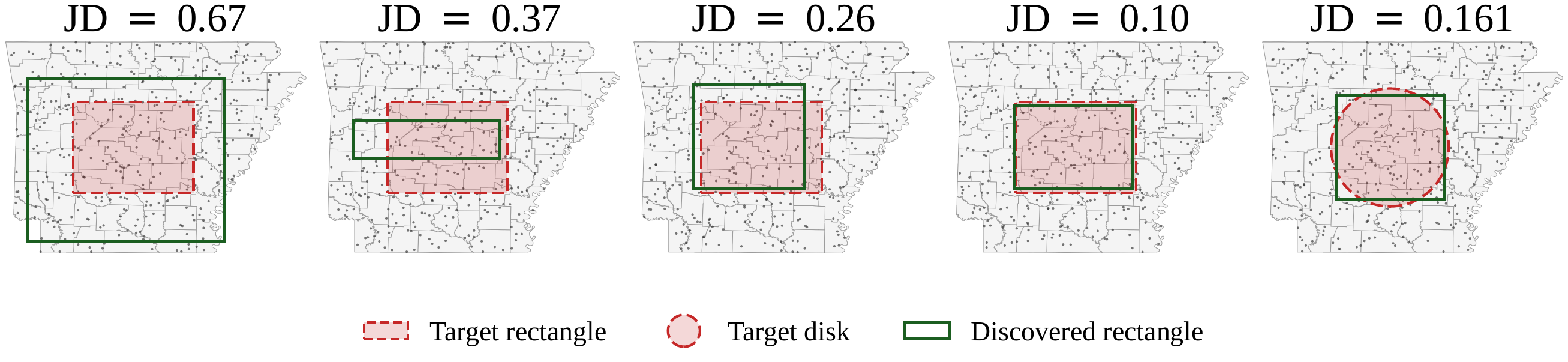}
    \Description{A horizontal row of five panels, each showing the outline
of Arkansas with county boundaries in light gray and small black dots
representing sampled points. Each panel is labeled at the top with a
Jaccard distance value: JD = 0.67, 0.37, 0.26, 0.10, and 0.161 from
left to right. The first four panels show a red dashed rectangle (the
planted target) overlapping a green solid rectangle (the discovered
rectangle); the two rectangles diverge more in the leftmost panel and
nest tightly in the fourth. The rightmost panel shows a red dashed
circle (planted disk) with a green solid rectangle (best-fit rectangle)
inscribed inside it.}

\vspace{-3mm}
    \caption{Point Jaccard distance between target (red) and discovered rectangles (green outline) for Arkansas counties. The last panel shows the best-fit discovered rectangle to a circular shape.}
    \label{fig:JDArkansasGeom5Rec}
\end{figure}

Figure \ref{fig:JDArkansasGeom5Rec} illustrates an example for Arkansas counties using the Geom 5 method. Lower Jaccard distance values indicate better recovery of the planted anomaly.
This provides visual intuition for the meaning of various Jaccard distances.  In particular, the last panel shows where the target region is a disk, not a rectangle; it has a Jaccard distance of $0.161$.  This is especially informative since in practice the true anomalies may not be rectangular, and this provides a sort of ``shape floor'' for how similar we should expect to recover a planted anomaly.  That is, below about $0.2$ Jaccard distance is probably sufficient.  On the other hand a Jaccard distance above $0.35$ is not a good fit.  


\section{Experimental Evaluation}
\label{sec:exp}
Our main experiment evaluates how effectively each scanning method recovers a planted anomalous region under varying conditions. We vary the algorithm's parameters and the experimental setup to see how algorithms perform in different case studies.
To capture a diverse range of spatial structures and region complexity, we conducted experiments on six different geospatial datasets. Arkansas, New York City, Utah, California, Georgia, and the entire United States. These cases represent a spectrum from local compact, densely populated regions (e.g., NYC) to large, sparsely populated ones (e.g., Utah).

Unless otherwise noted, each experimental result was produced by taking the average Jaccard distance of the $20$ trials.  Both the experimental setup and the newly proposed algorithms are random, so it is helpful to understand the distribution of the results.  We also plot shaded regions that show the results' average values plus or minus one standard deviation.  Methods with narrower bands are more consistent, whereas wider bands indicate a higher sensitivity to randomness in the setup. 

\paragraph*{Measuring Statistical Power}
While there are various ways to formulate it, \emph{statistical power} for a method informally captures how reliably it can solve a task; in our case the probability of detecting a planted anomaly at a certain difficulty level. 
In the classical sense, power is the probability of rejecting the null hypothesis of no cluster when an anomaly is truly present, which presumes a significance test at a fixed level~\citep{glaz2024handbook,tango2005flexibly}. 
We instead report how closely each method \emph{localizes} a planted region, via Point Jaccard distance averaged over trials, and treat a method as having higher power when it recovers the region at smaller $pq$ differences. The two notions are related but not interchangeable: a method could reject the null and still mislocate the cluster.

We vary problem difficulty, and measure the statistical power, using two factors: the \textit{$pq$ difference} and the \textit{size of the planted region}. The $pq$ difference (our primary focus) captures the contrast between the measured rates inside ($p$) and outside ($q$) the planted region. 
In our experiments, we always set $q=0.2$, so for each baseline value $b(z)$ outside a planted region, we expect to observe $\mathsf{E}[m(z)] = (0.2) b(z)$ as the corresponding measured value.  We vary the rate within the region from $p=0.2$ (so the $pq$ difference is $0$) and $p=0.9$ (so the $pq$ difference is $0.7$).  
The more different these values, the more apparent the anomaly should be, and the easier it should be to detect reliably.  When $p = 0.9$, the difference ($p - q = 0.7$) should present an obvious anomaly; when $p = 0.2$, the planted region should be indistinguishable from the background.

A method with large statistical power can with high probability recover the planted region even at lower $pq$ differences. Thus, we interpret the lower thresholds of $p$ (with fixed $q$) at which recovery is still reliably successful as clear indicators of higher power.


\paragraph*{Methods Considered}
We evaluated five point-based conversion strategies using the \texttt{pyScan} (with adaptive grid at size $100\times100$)  
and restrict our scan shapes to rectangles.  We chose this shape because it provides the best combination of representational complexity and computational efficiency.  
We consider a region-based input and convert the problem to point-based input in $5$ different ways.  
\textsf{Centroid} is the main baseline approach~\cite{Kul7.0}, where each region is represented by a single point on its geometric centroid.  
\textsf{Random Point} replaces each region with one randomly sampled from within the region's shape.  This is a variation of our proposed method but uses only a single point.  This variant shares the same computational cost as the centroid method, but introduces randomness.  
Later we more carefully compare to FlexScan~\cite{tango2005flexibly,tango2012restrict} and Buchin \emph{et al.}~\cite{buchin2012}.  

\textsf{Geom 5}, \textsf{Geom 10}, and \textsf{Geom 50} represent our proposed multi-point sampling strategy. They replace each region with $5$, $10$, or $50$ randomly sampled points, respectively, distributing the region’s baseline and measured values evenly across those points.


\subsection{Zip Codes in New York City}
\label{sec:nyc}

We start our evaluation by studying the zip code boundaries of New York City (263 polygons representing 248 unique ZIPs; ten ZIPs, e.g., 10004, are split into multiple polygons by water, and we treat each polygon as an independent region in the scan).
These shape files vary  significantly in size and geometry -- although they are designed to represent zip codes with approximately equal populations.  We assign the same baseline value to each zipcode region.    
We synthetically plant a rectangular region in a central portion of the map, covering approximately one-third of the total area. This planted region is shown inset in Figure~\ref{fig:NewYorkCityTRPlot}.  Using this shape, we fix $q=0.2$, and vary the rate $p$ inside the shape.  

Figure \ref{fig:NewYorkCityTRPlot} shows the Point Jaccard distance as a function of the $pq$ difference on the $x$-axis for each method. Both the Centroid and Random Point approaches exhibit similar behavior: they are relatively noisy and consistently identify the planted region only when the $pq$ difference is approximately $0.35$ or $0.4$.  In contrast, as we increase the number of random points from $1$ to $5$, $10$, or $50$ per region, the results become more stable and the planted region is detected with much smaller $pq$ differences.  

\begin{figure}[t]
    \includegraphics[width=0.95\linewidth]{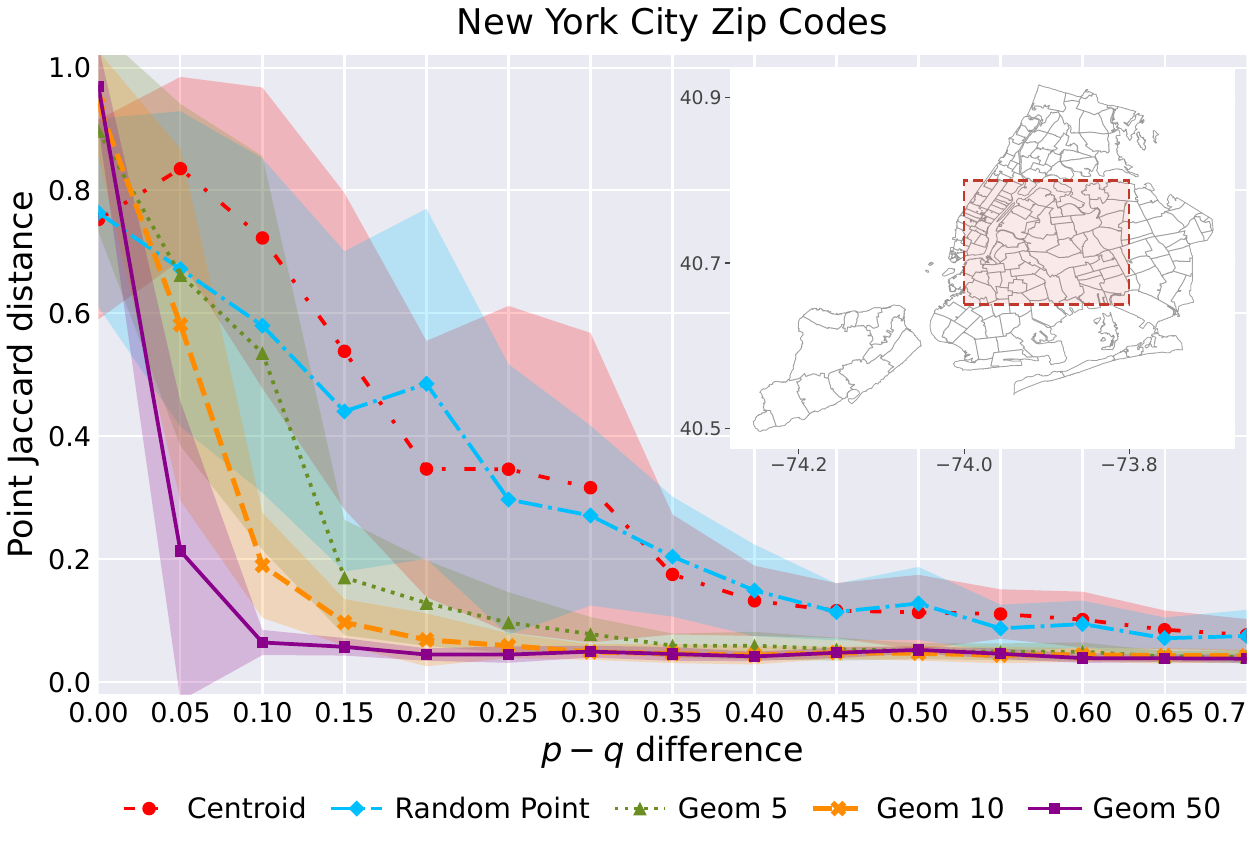}
    \Description{A line plot of Point Jaccard distance (y-axis, 0 to 1)
versus p minus q difference (x-axis, 0 to 0.7) with five curves, one
per method: Centroid, Random Point, Geom 5, Geom 10, and Geom 50, each
in a distinct color with shaded standard-deviation bands. The curves
generally decrease as $p-q$ increases. Geom 50 is approximately $0.213$
at $p-q=0.05$ and falls below $0.1$ at $p-q=0.10$; Centroid and Random
Point decrease more slowly and have wider standard-deviation bands.
In the upper-right corner an inset map shows the outline of
New York City zip-code boundaries in light gray with a small red dashed
rectangle marking the planted target region.}
    \caption{New York City with $263$ polygons representing $248$ unique ZIP codes. The inset (upper right)
    shows the planted target rectangle (red dashed). The 
    Point Jaccard distance versus $pq$ difference, averaged over $20$ trials with $\pm1$ standard
    deviation bands.}
    \label{fig:NewYorkCityTRPlot}
\end{figure}

Specifically, with $5$ random points, the planted region is reliably recovered at $pq$ differences of $0.2$ or greater; with $10$ points, detection occurs at $pq$ differences as low as $0.1$. With $50$ points per region, recovery is close to the $0.2$ threshold at $p-q=0.05$, where Point Jaccard distance is $0.213$, and is strong by $p-q=0.10$, where the distance falls to $0.064$. 


Note that even when the $pq$ difference is very large (e.g., at least $0.5$), the \textsf{Geom} methods converge to a Point Jaccard distance of around $0.05$, and Centroid and Random Point to around $0.1$.  
Although we could potentially devise a way to measure the accuracy of the region recovery so that it converges to $0$ error, this nonzero plateau is consistent with randomness in the synthetic data-generation process.

\subsection{Counties of US States}

\paragraph{Utah.}
Next, we consider the 29 counties in the state of Utah; see the inset panel of Figure \ref{fig:UtahTRPlot}.  This example has few counties which are often irregularly shaped, which introduces additional geometric challenges for region-based analysis.


\begin{figure}[t]
    \includegraphics[width=0.95\linewidth]{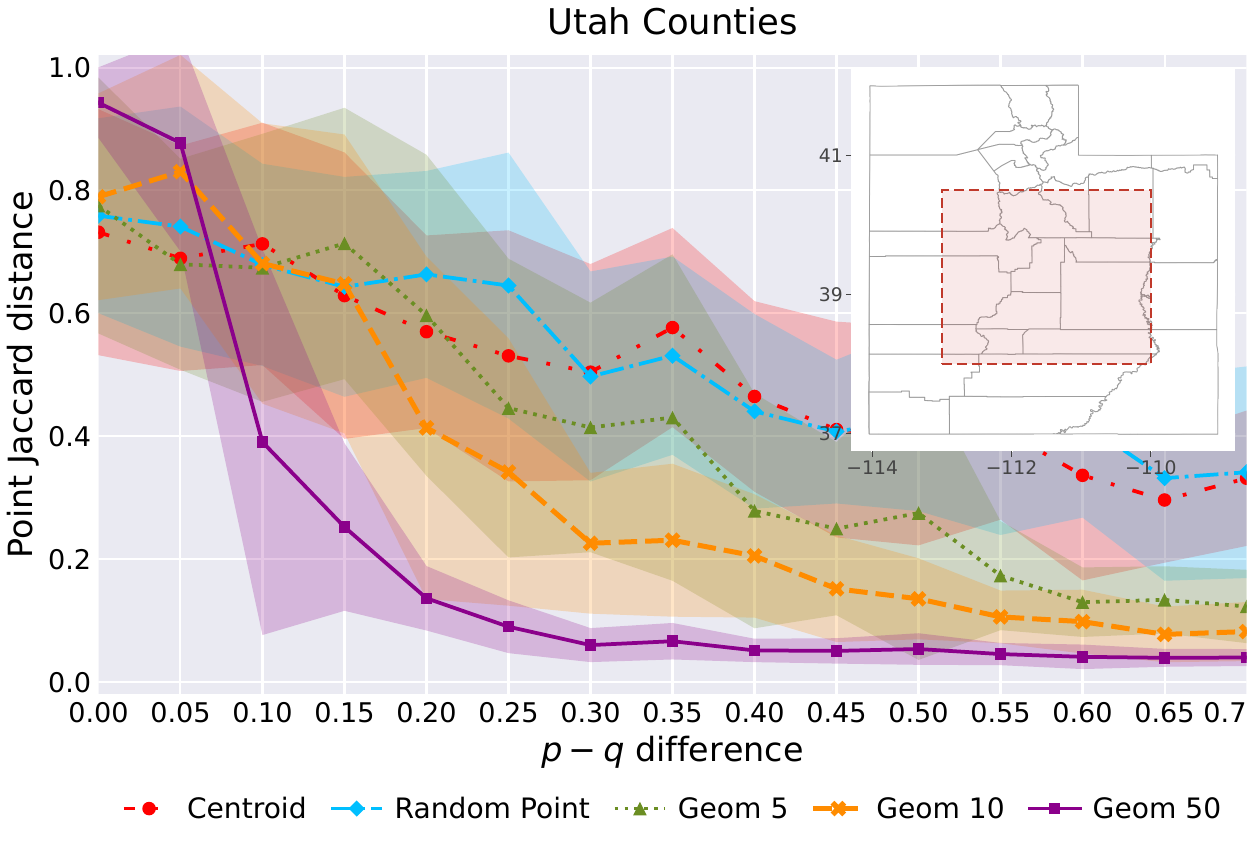}
    \Description{A line plot of Point Jaccard distance (y-axis, 0 to 1)
versus p minus q difference (x-axis, 0 to 0.7) with five curves
representing the methods Centroid, Random Point, Geom 5, Geom 10, and
Geom 50, each in a distinct color with shaded standard-deviation bands.
The curves generally decline as $p-q$ increases, with Geom 50 lowest at
the right and Centroid and Random Point remaining near $0.33$--$0.34$.
The Centroid and Random Point curves are visibly noisier
with wide bands, while Geom 50 drops fastest and most smoothly. In the
upper-right corner an inset map shows the outline of Utah and its 29
counties in light gray with a red dashed rectangle marking the planted
target region covering most of the eastern half of the state.}

    \vspace{-3mm}
    \caption{Utah with 29 counties. The inset (upper right) shows the
    state and the planted target rectangle (red dashed). The Point Jaccard distance versus $pq$ difference.} 
    \label{fig:UtahTRPlot}
\end{figure}

Figure~\ref{fig:UtahTRPlot} shows, due to the number of regions and their complex shapes, substantial variability in the Point Jaccard distance for both the Centroid and Random Point approaches, as well as for Geom~5. These methods struggle to consistently detect the planted region, particularly at low $pq$ differences.

The Geom~10 and Geom~50 approaches show more reliable behavior. Although some instability persists at small $pq$ differences, their performance stabilizes as the $pq$ difference increases. 


\begin{figure}[b]
    \includegraphics[width=0.9\linewidth]{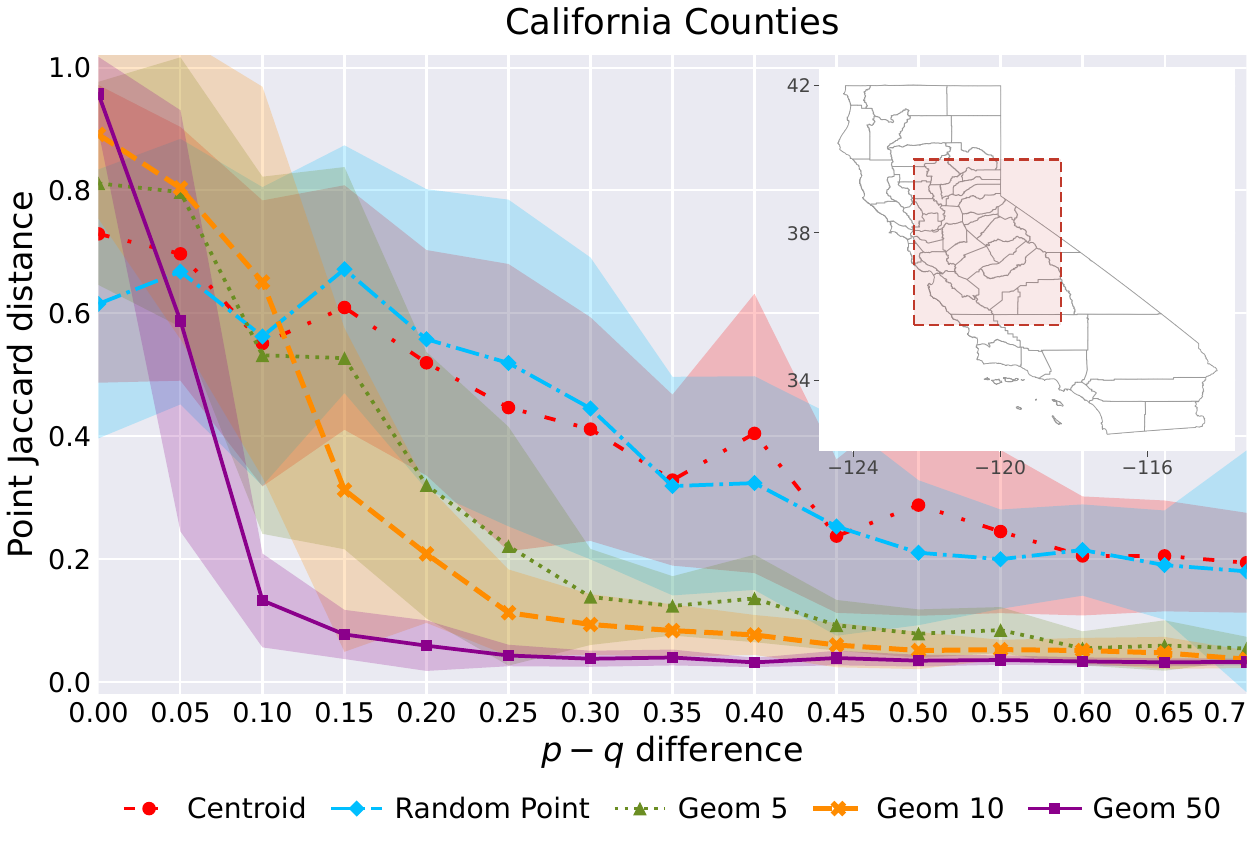}
\Description{A line plot of Point Jaccard distance (y-axis, 0 to 1)
versus p minus q difference (x-axis, 0 to 0.7) with five curves
representing the methods Centroid, Random Point, Geom 5, Geom 10, and
Geom 50, each in a distinct color with shaded standard-deviation bands.
The curves generally decrease as $p-q$ increases; at the right, Geom 50
is approximately $0.03$, while Centroid and Random Point are
approximately $0.2$. The Centroid and Random Point curves have wide bands indicating
high variability, while Geom 50 drops fastest and most smoothly. In the
upper-right corner an inset map shows the oblong outline of California
with its 58 counties (69 polygons)
in light gray and a red dashed rectangle marking
the planted target region overlapping the Bay Area and extending beyond
the state's western boundary.}

    \vspace{-3mm}
    \caption{California, 
    69 polygons over 58 counties.
    The inset
    (upper right) shows the state and the planted target rectangle (red
    dashed).
    The main plot shows Point Jaccard
    distance.} 
    \label{fig:CaliforniaTRPlot}
\end{figure}

\paragraph{California.}
The counties in the state of California present a distinct challenge for spatial analysis. Although there are 
58 counties\footnote{California has 58 counties, however our shapefile splits three coastal counties by their offshore islands leading to 69 polygons, which we treat as independent regions.},
the state's shape is oblong and the counties are of vastly different sizes, introducing spatial heterogeneity and complicating detection.  The scenario we consider, shown in Figure \ref{fig:CaliforniaTRPlot}, has the target rectangle that separates several smaller counties in the Bay Area and even includes areas outside the state boundaries.

\begin{figure}[t]
    \includegraphics[width=0.95\linewidth]{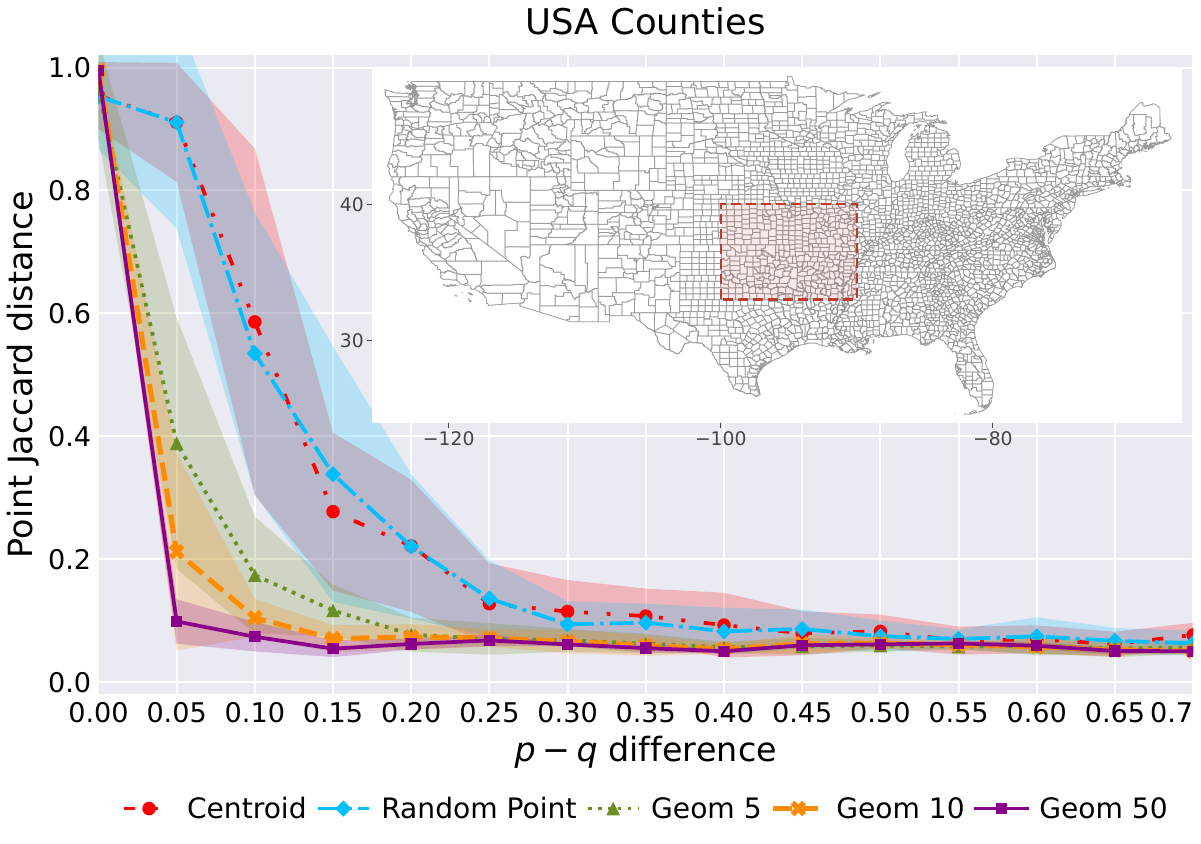}
    \Description{A line plot of Point Jaccard distance (y-axis, 0 to 1)
versus p minus q difference (x-axis, 0 to 0.7) with five curves
representing the methods Centroid, Random Point, Geom 5, Geom 10, and
Geom 50, each in a distinct color with shaded standard-deviation bands.
All curves decline from near 1 at the left toward near 0 at the right,
and the bands are narrower than in smaller-region datasets, indicating
more stable behavior. Geom 50 reaches Point Jaccard distance approximately $0.10$ at
$p-q=0.05$, while Geom 10 falls below the $0.20$ recovery threshold at
$p-q=0.10$. In the upper-right corner an
inset map shows the outline of the contiguous United States with all
3{,}108 counties in light gray and a red dashed rectangle marking the
planted target region covering parts of the central and southern US.}

    \vspace{-3mm}
    \caption{Point Jaccard distance for $3{,}108$ counties in the
    contiguous United States. The inset shows the country and the planted
    target rectangle.}
    \label{fig:USATRPlot}
\end{figure}

The results highlight the effectiveness of our proposed method. Geom~50 consistently achieves lower Point Jaccard distances across a range of $pq$ differences, indicating high statistical power. While the other methods, including the centroid approach, can achieve a small Point Jaccard distance, they have much more variation in what they find, demonstrated by much larger standard deviation.



\paragraph{Contiguous USA.}
We evaluate our methods on the $3{,}108$ counties in the contiguous
United States; Figure~\ref{fig:USATRPlot} shows the results.
The larger number of regions and the increased spatial coverage create a more stable statistical environment, allowing all methods to perform more reliably.

As shown in Figure~\ref{fig:USATRPlot}, all methods benefit from the larger number of regions, but the Centroid method improves more slowly and remains above $0.1$ at a $pq$ difference of $0.3$. The proposed methods exhibit more statistical power: Geom~50 reaches approximately $0.10$ at a $pq$ difference of $0.05$, while Geom~10 is approximately $0.21$ at $0.05$ and approximately $0.10$ at $0.1$.


\begin{figure}[htbp]
    \includegraphics[width=0.95\linewidth]{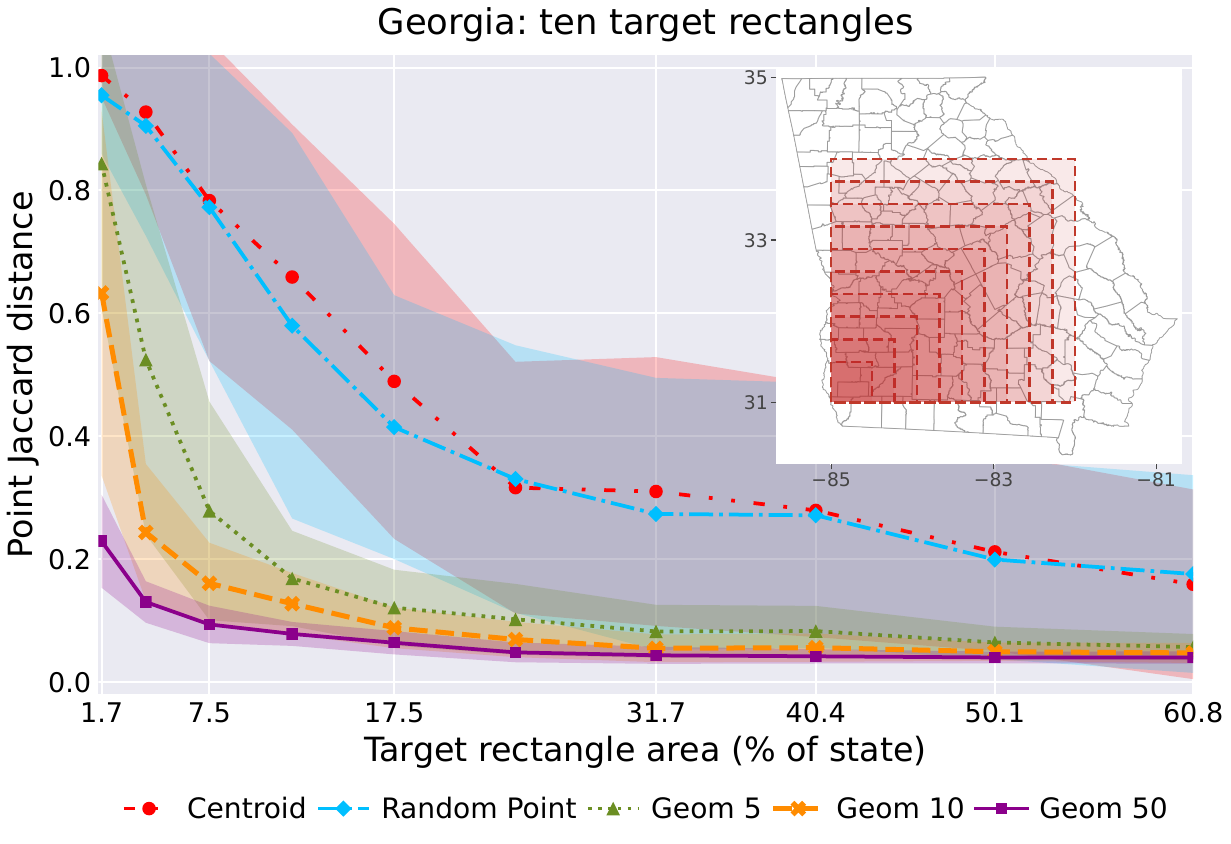}
    \Description{A line plot of Point Jaccard distance (y-axis, 0 to 1)
versus target rectangle area as a percentage of the state of Georgia
(x-axis, with tick labels 1.7, 7.5, 17.5, 31.7, 40.4, 50.1, and 60.8)
with five curves representing the methods Centroid, Random Point, Geom
5, Geom 10, and Geom 50, each in a distinct color with shaded
standard-deviation bands. The Centroid and Random Point curves stay
high (above 0.6) for small target sizes and drop slowly; Geom 50 falls
below 0.2 for targets covering approximately $4.1\%$ of the state, while
Geom 10 and Geom 5 require larger targets. In the upper-right corner an inset map shows the outline of Georgia with its counties in light gray and ten nested red dashed rectangles of
increasing size centered on roughly the same point.}
    \caption{Point Jaccard distance as a function of the relative size of the planted target rectangle in Georgia, with $p-q$ fixed at $0.4$ and averaged over $80$ trials. The inset shows the ten nested target rectangles overlaid on Georgia counties.}
    \label{fig:GeorgiaRecs}
\end{figure}

\subsection{Effect of Target Rectangle Size}
To further evaluate robustness, we conducted a different experiment to show how the methods are affected by the size of the planted region. We started by selecting a small rectangle and progressively increasing its size from $2\%$ to $61\%$ of the state of Georgia; see Figure \ref{fig:GeorgiaRecs}. In this experiment, we fixed the $pq$ difference at $0.4$ to isolate the effect of the region size on the statistical power.  

As the target rectangle becomes smaller, it contains fewer anomalous signals, making it more difficult to distinguish it from natural random variation elsewhere.  The results in Figure~\ref{fig:GeorgiaRecs} show that the Centroid and Random Point methods struggle across all region sizes, with performance degrading substantially for regions smaller than $12\%$ of the state. 



In contrast, our proposed methods recover substantially
smaller targets. At the $0.2$ recovery threshold, Geom 5, Geom 10, and
Geom 50 recover targets covering as little as approximately $12.0\%$,
$7.5\%$, and $4.1\%$ of the state, respectively. By comparison, Random
Point first reaches the threshold at approximately $50.1\%$, while
Centroid reaches it only at the largest tested target size, $60.8\%$.
This highlights that adding even a few random points (instead of 1) in each region can greatly enhance the statistical power of the methods to detect significantly small spatial anomalies.

\subsection{Comparison with FlexScan and Buchin et al.}
\label{sec:flexscan-buchin}
We also evaluate our proposed method in counties in the state of Arkansas and compare its performance with the FlexScan\footnote{FleXScan v3.1.2 software manual instructions \citep{FleXScan}} technique (a brown curve), and the area-based method of Buchin et al.~\citep{buchin2012} (a dark blue curve). FlexScan identifies a connected region of counties with elevated rates.  
The Arkansas dataset includes a moderate number of counties (75), each with fairly uniform shapes, making it a favorable setting for FlexScan.  
As in previous experiments, we plant a rectangular anomaly, this time covering approximately 30\% of the state; see Figure~\ref{fig:ArkansasFlexScan30Plot}.  

\begin{figure}[htbp]
    \includegraphics[width=0.9\linewidth]{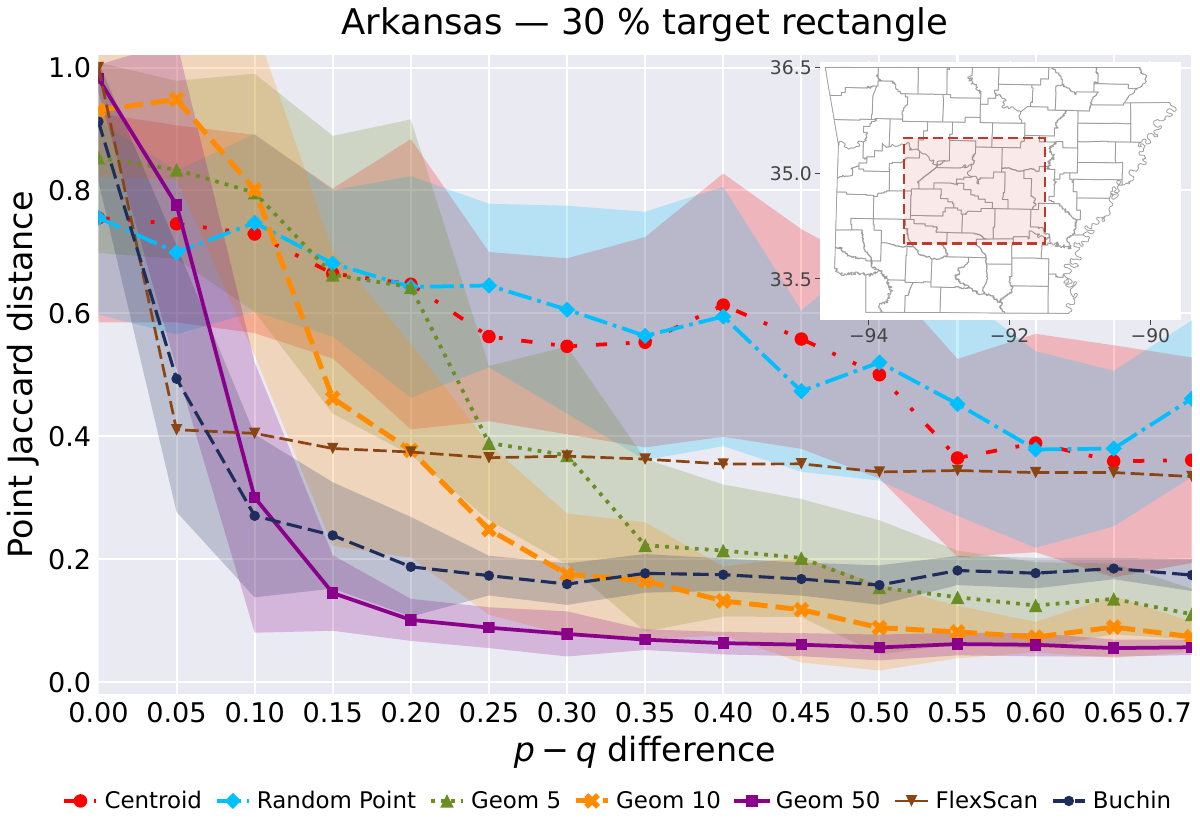}
    \Description{A line plot of Point Jaccard distance (y-axis, 0 to 1)
versus p minus q difference (x-axis, 0 to 0.7) with seven curves
representing the methods Centroid, Random Point, Geom 5, Geom 10, Geom
50, FlexScan, and Buchin, each in a distinct color with shaded
standard-deviation bands. The Centroid and Random Point curves stay
noisy and high; FlexScan plateaus around 0.35 to 0.4 even as the x-axis
increases; the Buchin and Geom 50 curves drop fastest and reach the
lowest values, with Geom 50 slightly below Buchin at the right end. In
the upper-right corner an inset map shows the outline of Arkansas with
counties in light gray and a red dashed rectangle marking the planted
target region covering roughly 30 percent of the state.}
    \vspace{-3mm}
    \caption{Point Jaccard distance between the planted target and the discovered region for Arkansas, with the target rectangle covering about $30\%$ of the state.}
    \label{fig:ArkansasFlexScan30Plot}
\end{figure}

\begin{figure}[t]
    \includegraphics[width=0.9\linewidth]{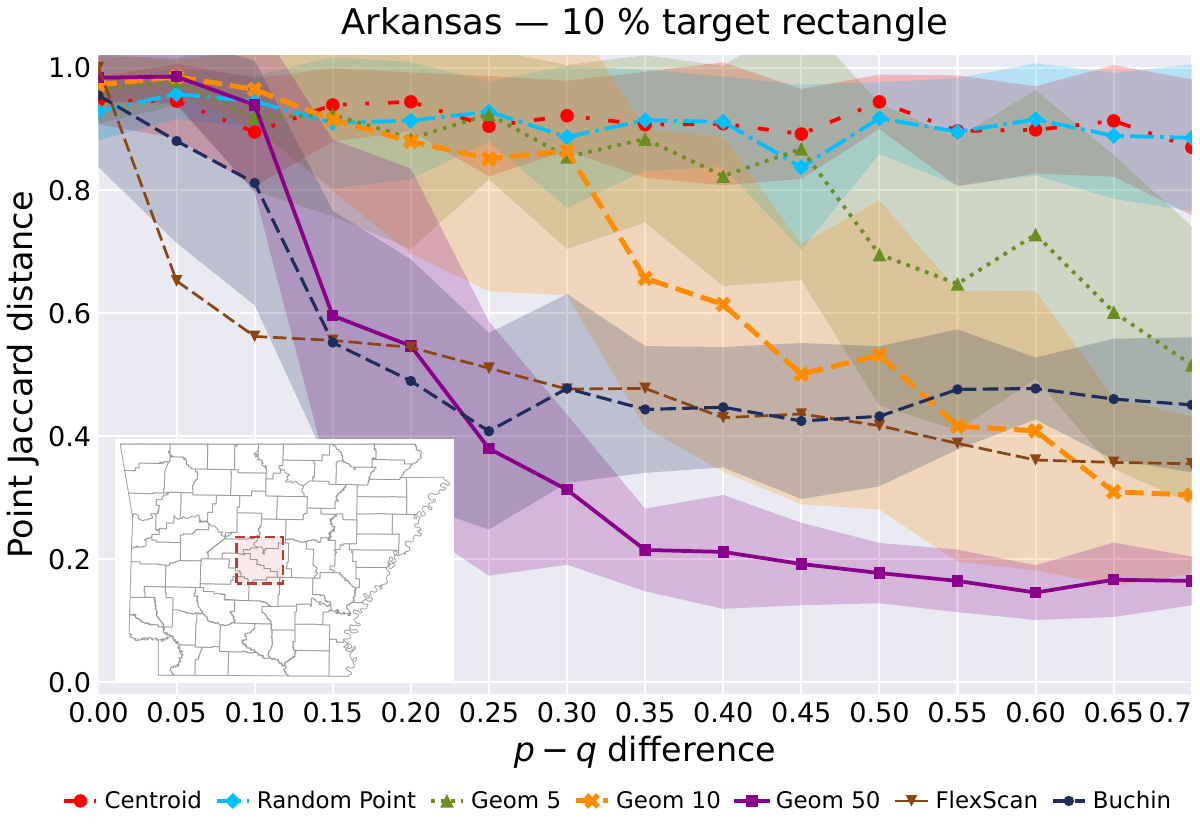}
    \Description{A line plot of Point Jaccard distance (y-axis, 0 to 1)
versus p minus q difference (x-axis, 0 to 0.7) with seven curves
representing the methods Centroid, Random Point, Geom 5, Geom 10, Geom
50, FlexScan, and Buchin, each in a distinct color with shaded
standard-deviation bands. Compared to the larger-target case, the
curves separate more sharply: Centroid and Random Point stay near 1
across most of the x-axis; FlexScan plateaus around 0.35 to 0.4; Buchin
plateaus around 0.45; and only Geom 50 drops cleanly below 0.2, doing
so by p minus q of about 0.25. In the lower-left corner an inset map
shows the outline of Arkansas with counties in light gray and a small
red dashed rectangle marking the planted target region covering roughly
10 percent of the state.}
    \vspace{-3mm}
    \caption{Point Jaccard distance between the planted and discovered regions for Arkansas, with the target rectangle covering about $10\%$ of the state.}
    \label{fig:ArkansasFlexScan10Plot}
\end{figure}

The plot in Figure \ref{fig:ArkansasFlexScan30Plot} shows how the methods perform as the $pq$ difference increases. We observe that the performance of the Centroid, Random Point, and Geom 5, 10, 50 methods follows trends similar to those of previous experiments. 

\paragraph{FlexScan}
While FlexScan performs reasonably, its Point Jaccard distance plateaus around $0.35$–$0.4$ 
(above the $0.2$ practical recovery threshold), 
even as the $pq$ difference increases.  This appears to be an artificial implementation limitation, as the code limits the search for the potentially anomalous cluster to at most 15 regions.

To ensure a fair comparison, we repeated the experiment on the Arkansas dataset using a smaller target rectangle covering only 10\% of the area, results shown in Figure~\ref{fig:ArkansasFlexScan10Plot}.  This setup is more challenging for both the Centroid and our proposed methods. Nevertheless, Geom 50 remains the best performer, achieving a Point Jaccard distance of approximately 0.2 (or less) for $pq \geq 0.35$.
FlexScan, on the other hand, fails to achieve a Point Jaccard distance below $0.35$, even as $pq$ increases.  Notably, for very small $pq$ differences ($\leq 0.1$), FlexScan outperforms our methods. Still, this advantage is limited, as the Point Jaccard distance is still quite high (Jaccard distance $\approx 0.6$), indicating that the overlap between the target and the discovered regions has some overlap on average, but not much.

\textit{Buchin.}  On the larger target (Figure~\ref{fig:ArkansasFlexScan30Plot}) the area-based method of Buchin et al.~\citep{buchin2012} tracks \textsf{Geom 50} closely but with on average a point Jaccard $0.1$ larger, and selecting nearly the correct scale (on average $1.07\times$ the planted area; offset $\sim\!6.0$ km) even though it was given the planted size.  On the smaller target (Figure~\ref{fig:ArkansasFlexScan10Plot}) the two methods have a larger gap: Buchin systematically oversizes to on average $1.61\times$ planted (offset $\sim\!11.4$ km) and plateaus near $0.45$, while \textsf{Geom 50}, whose continuous search adapts the scale, stays at $0.95\times$ planted (offset $\sim\!3.1$ km) and reaches $0.16$ for Point Jaccard.  
Figure~\ref{fig:RectMapAllMethods} shows these discovered rectangles geographically.  
Unlike Buchin \emph{et al.}'s area-based method, our approach requires no user-supplied window size and is roughly $90\times$ faster on Arkansas (Section~\ref{sec:runtime}, Table~\ref{tab:runtime_usa+AR}).  A comparison on \emph{disk-shaped} planted targets, where pyScan can match the target exactly, is reported in Appendix~\ref{app:disk-buchin}.

\begin{figure}[t]
  \centering
  \includegraphics[width=0.70\linewidth]{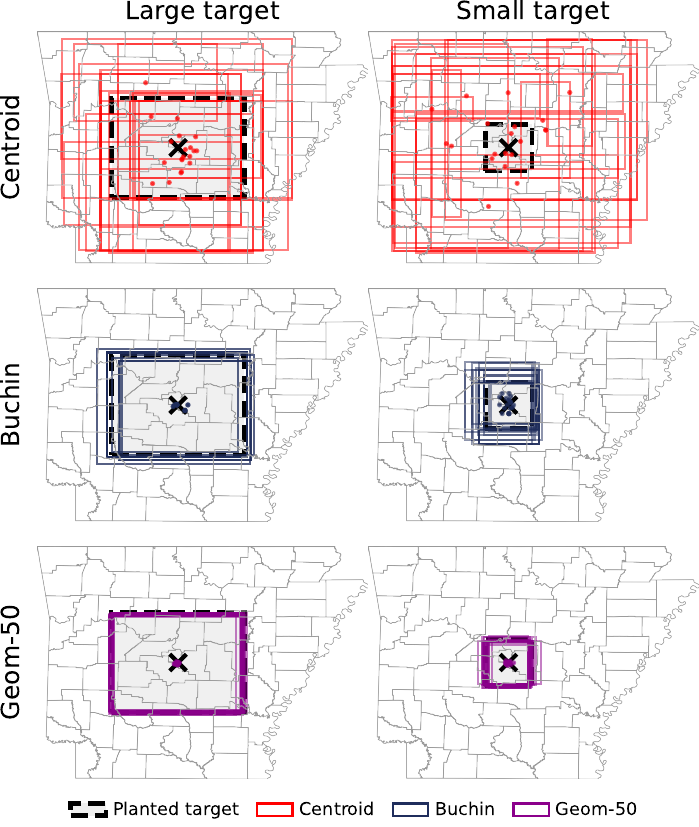}
  \Description{A 3-by-2 grid of maps of Arkansas with county boundaries
drawn in light gray. Each map has a black dashed rectangle marking the
planted target and a small black x at its center. The rows are labeled
Centroid, Buchin, and Geom-50; the columns are labeled Large target and
Small target. Each map shows about twenty overlapping rectangles in
that row's color: red for Centroid, navy for Buchin, magenta for
Geom-50. In the top row (Centroid) the rectangles spread widely beyond
the target on the large target and are dramatically oversized and
displaced on the small target. In the middle row (Buchin) the
rectangles nest closely around the large target and form a thick ring
slightly larger than the small target. In the bottom row (Geom-50) the
rectangles cluster tightly on both planted targets.}
  \caption{Discovered rectangular windows on Arkansas for 
  planted targets (black dashed) at $p-q = 0.5$, $20$ trials each.  
  \textbf{Rows:} \textbf{Centroid} (red),
  \textbf{Buchin} (navy), and \textsf{Geom 50} (magenta).}
  \label{fig:RectMapAllMethods}
\end{figure}

\subsection{Ablation on Sampling and Similarities}
\label{ablation}

Finally, we consider two design choices: (1) the similarity metric used to compare the target and discovered regions, and (2) the sampling points strategy within each region.  Our findings indicate that, while these choices introduce slight differences in performance, the selected default settings yield consistently strong results. 

\begin{figure}[htbp]
    \includegraphics[width=1.02\linewidth]{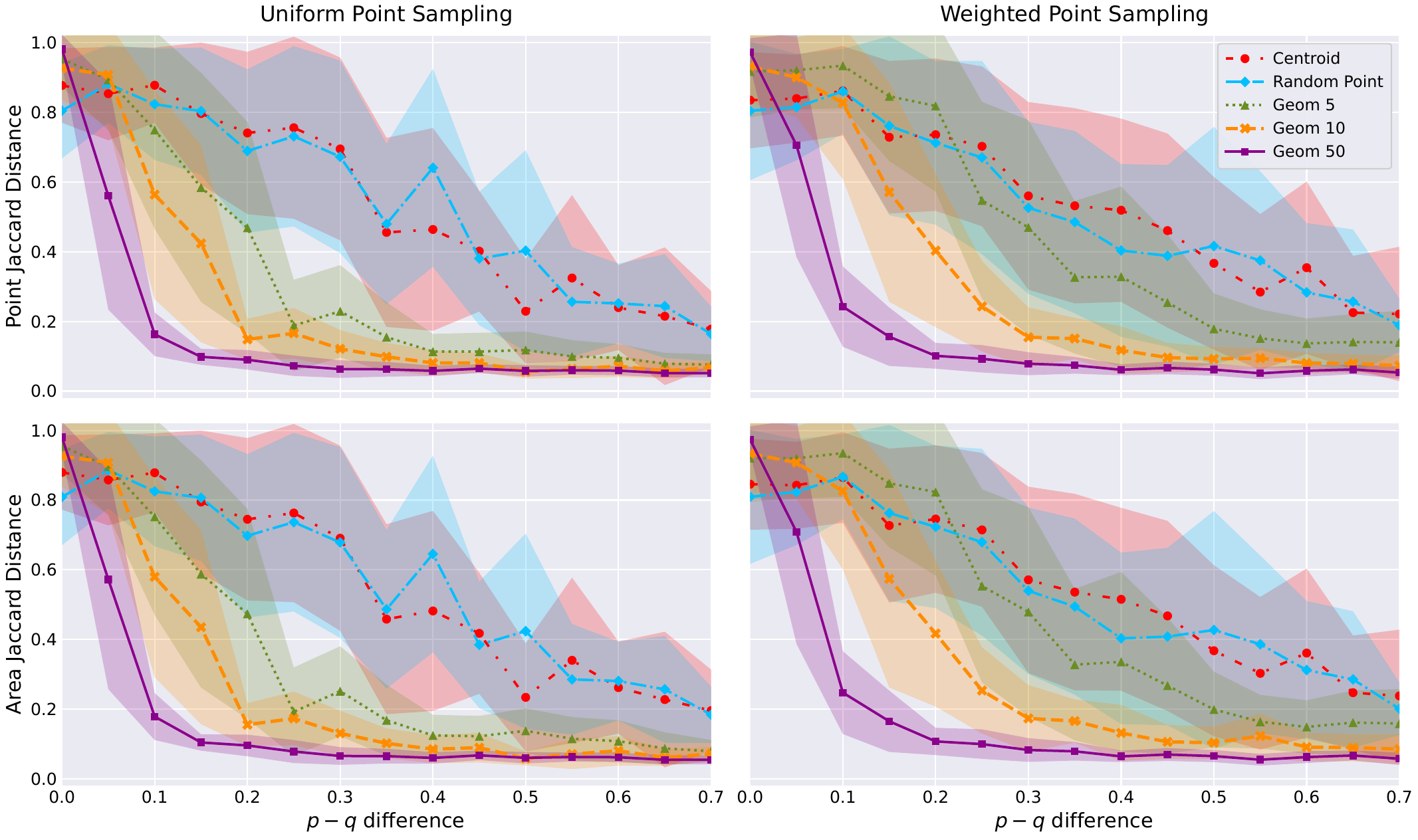}\\[0.6em]
    \includegraphics[width=1.02\linewidth]{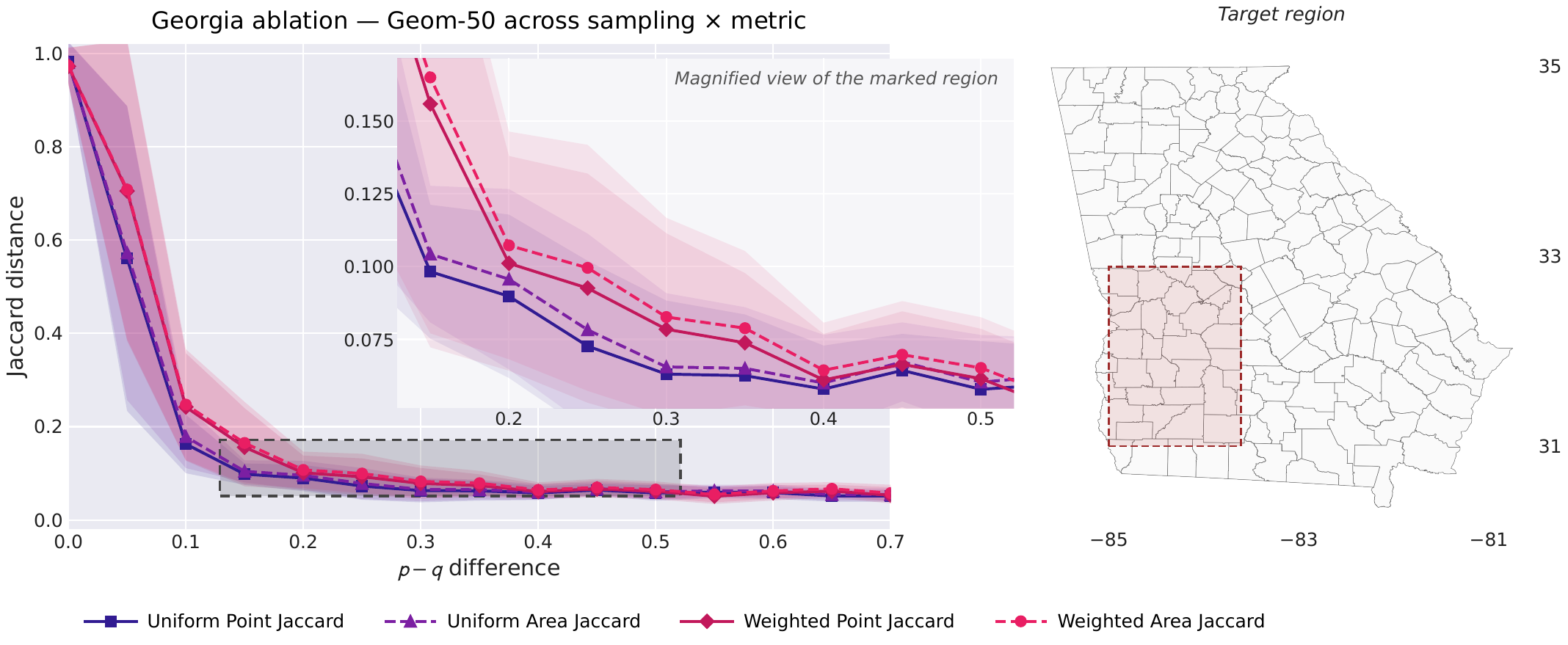}
    \Description{The figure has two stacked image groups. The top group is
a 2 by 2 grid of line plots, all showing distance versus p minus q
difference (x-axis 0 to 0.7) with five curves per panel for the methods
Centroid, Random Point, Geom 5, Geom 10, and Geom 50, each with shaded
standard-deviation bands. The top row uses Point Jaccard distance on
the y-axis; the bottom row uses Area Jaccard distance. The left column
uses Uniform Point Sampling; the right column uses capped Weighted Point
Sampling. Across all four panels the Geom curves drop fastest, with
Centroid and Random Point staying noisier and higher. The bottom group
has two side-by-side panels: a line plot comparing Geom 50 across four
settings (uniform-point, uniform-area, weighted-point, weighted-area)
with a magnified inset zooming into p minus q between 0.15 and 0.50
where the four curves separate, and a small map of Georgia counties on
the right with a red dashed rectangle marking the planted target
region.}

    \vspace{-2mm}
    \caption{Georgia counties with
    a planted region covering approximately $30\%$ of the state.
    \textbf{Top:} the $2\times2$ grid shows Point Jaccard distance (top row) and Area Jaccard distance (bottom row) under uniform point sampling
    (left column) and capped weighted point sampling (right column) across the five methods. \textbf{Bottom:} focused comparison of Geom~$50$ across the four (sampling, metric) settings, with the magnified panel zooming into
    $p-q\in[0.15,\,0.50]$.}
    \label{fig:Georgia4Plots}
\end{figure}

Our main evaluation metric, the \emph{Point Jaccard distance}, is based on comparing point sets inside each region. Specifically, we sample $500n$ points from the $n$ regions and compute the Jaccard distance between the sets associated with the target and discovered regions.

As an alternative, we could consider the \emph{Area Jaccard distance}, which compares regions directly via geometric overlap. This metric is defined as one minus the ratio of the intersection area to the union area between the target and discovered regions. It provides a shape-based similarity score.


We also compare two strategies for sampling points within each region:
\emph{Uniform Sampling:} Each region receives an equal number of sampled points (e.g., 1, 5, 10, or 50).
\emph{Weighted Sampling:} Each region receives a method-specific minimum number of points, with more points assigned to more populous regions, up to an upper cap. This aims to reflect a higher signal strength in denser regions.

While weighted sampling might seem more natural, it can lead to poor performance in sparse regions. For example, in the Georgia dataset, densely populated areas such as Atlanta are capped, while smaller or rural counties receive few points—diminishing the ability to detect target rectangles that span low-density areas.



Figure \ref{fig:Georgia4Plots} shows all four combinations in Georgia counties, using Geom 50 on a single plot using a target rectangle covering about $30\%$ of the area. It again shows the two uniform sampling 
settings performing slightly better for small values of the $pq$ difference, though the two arms do not use the same number of points.

Among the top 4 plots, the first row uses the Point Jaccard distance and the second row uses the Area Jaccard distance.  The first column uses uniform point sampling, and the second uses capped weighted point sampling.  While there are some differences among these four plots, all give the same general effect.  Perhaps the only significant signal is that under uniform point sampling, Geom 5 and Geom 10 more consistently recover the target region for $pq$ difference in the range of $0.15$ to $0.25$, when compared to using weighted sampling.

\section{Runtime and Convergence}

\label{sec:runtime}

Converting a problem with $n$ input points to $50 n$ (as we did in Geom 50) points could lead to a significant computational slowdown, changing problems from tractable to intractable.  Although naive algorithms for computing spatial scan statistics over rectangles on $n$ points scale with $O(n^4)$ or $O(n^5)$, there are improvements that reduce this to near quadratic in $n$; Agarwal \emph{et al.}~\citep{APV06} show how to compute this in $O(n^2 \log n)$ time.  An efficient version of this algorithm is implemented in pyScan, which we employ in this paper.  Even with this optimized implementation, a $50\times$ increase in data could imply a theoretical $2{,}500\times$ slowdown.   
Moreover, we leverage mild approximation in pyScan setting its \texttt{max\_subgrid} to an adaptive resolution grid at $100 \times
100$.  Its running time now partially depends on this grid size $g=100$ as $O(g^2 \log g + n)$~\cite{MP18}. 
Increasing the points per region therefore has a more limited
effect on runtime. Concretely, going from Arkansas ($n=75$) to
the USA ($n=3{,}711$ county-by-congressional-district regions) is a $50\times$ increase in regions, yet the scan
time grows only about $1.2\times$ (Table~\ref{tab:runtime_usa+AR}).

All runtimes were measured as wall-clock time on an Apple M2 Pro (8 performance + 4 efficiency cores), 16\,GB RAM, macOS, Python 3.10. 
They consider a single region discovery
at $p = 0.6$, $q = 0.2$, measuring only the scan step (case generation
and data loading excluded).  Buchin \emph{et al.}, as default, considers $9$ sizes chosen around the planted size.  FlexScan's number
additionally includes a Monte Carlo significance test that the other
methods do not perform; but we were unable to decouple in running the code.  


\begin{table}[t]
    \setlength{\tabcolsep}{3pt}
    \small
    \begin{tabular*}{\linewidth}{@{\extracolsep{\fill}}rcccccc@{}}
       \toprule
       \textbf{USA} & Geom 1 & Geom 5 & Geom 10 & Geom 50 & Buchin & FlexScan \\
       \midrule
       Runtime (s) & 0.29 & 0.29 & 0.29 & 0.33 & 114.0 & 1109 \\
       St.dev.     & 0.001 & 0.001 & 0.002 & 0.002 & 4.4 & -- \\
       \midrule
       \textbf{Arkansas} \\
       \midrule
       Runtime (s) & 0.21 & 0.26 & 0.27 & 0.28 & 25.3 & 8.00 \\
       Std         & 0.015 & 0.005 & 0.002 & 0.002 & 0.12 & -- \\
       \bottomrule
    \end{tabular*}
     \caption{\label{tab:runtime_usa+AR}
     Runtime in seconds for one region discovery, scan step only. Top:
    $3{,}711$ U.S. county-by-congressional-district regions; bottom: $75$ Arkansas
    counties. Geom-$k$ and Buchin report means over five runs. }
    \vspace{-3mm}
\end{table}




Table~\ref{tab:runtime_usa+AR} shows that pyScan's scan time is dominated
by the fixed-resolution grid scan, so Geom $k$ for $k \in \{1,5,10,50\}$
all complete one discovery in about $0.3$ seconds on the USA dataset
($n = 3{,}711$ regions). While we demonstrate this with pyScan, other
efficient scanning approaches, like those by Neill and Moore
\citep{NM04}, would also keep this manageable. Buchin's rectangle scanner takes $114$ seconds per discovery on the USA dataset --- about
$350\times$ slower than Geom 50 --- because it sweeps a nine-point size
grid and performs a translation search at each size. FlexScan takes $1{,}109$ seconds on the USA dataset (about $18$
minutes), although that number includes a Monte Carlo significance test the
other methods do not perform. As discussed in Section~\ref{sec:software},
FlexScan also presents interpretability challenges on large datasets,
returning multiple clusters that are not straightforward to reconcile.


To complement our large-scale analysis, we also evaluated runtimes
on the smaller Arkansas dataset, using a planted region that covers
approximately $30\%$ of the state; bottom part of Table \ref{tab:runtime_usa+AR}. 
This setup aligns with prior evaluation studies \citep{tango2005flexibly,tango2012restrict} using
FlexScan. Geom $50$ completes a single discovery in approximately
$0.28$ seconds; Buchin's rectangle scanner takes $25$ seconds (about $90\times$ slower), and FlexScan takes $8$ seconds, including its
Monte Carlo significance test. These results demonstrate that our method's runtime advantage
holds across both small and large region counts.  


Finally,  pyScan also offers more aggressive approximation methods that compute the approximately optimal spatial scan statistic.  These employ algorithms that involve subsampling data in addition to adaptive gridding, and other methods to scan them efficiently.  In particular, these methods use coresets~\citep{phillips2017coresets}, which first reduce the dataset problem to a size that depends only on the desired accuracy guarantee.  As a result, the initial size of the dataset no longer becomes a runtime constraint (other than the time to store and sample from it), only the desired degree of accuracy.  In this context, the proposed approach will not meaningfully impact the runtime.


\subsection{Convergence}
\label{sec:converge}

Our experiments demonstrate that sampling multiple points per region improves the statistical power, enabling more accurate recovery of true anomalous regions. However, we have not yet provided a theoretical explanation of why these methods work. In this section, we offer a stylized analysis to provide such insight.
As with any stylized analysis, it will be an imperfect model of real-world settings, but we hope it provides some insight nonetheless. 
There are two key components to understanding how accuracy improves with the number $k$ of sample points per region in \textsf{Geom $k$}.  First, how accurately we can approximate the  function $\phi$, which the scan statistics algorithm seeks to optimize. Second, how this approximation translates to the quality of the recovered region, measured using Jaccard distance $d_{\textsc{Jac}}$.  We address the second issue first.  The scan statistic problem is a non-convex optimization problem with the potential for many local minima (e.g., see conditional hardness results~\citep{MP18,backurs2016tight}), so unless we make some structural assumptions about the problem, even small approximation errors in $\phi$ could lead to a significantly different optimal shape, resulting in a large change in Jaccard distance.  
To address this, we analyze what we call \emph{($\eps,\gamma)$-stable shapes}. An optimal shape $S^* \in \mathcal{S}$ is ($\eps,\gamma$)-stable if for any $S' \in \mathcal{S}$ such that $d_{\textsc{Jac}}(S^*, S') > \gamma$ it holds that $\phi(S^*) - \phi(S') > \eps$. This stability assumption implies that any shape with a score within $\eps$ of the optimum must be within $\gamma$ in Jaccard distance.

While this assumption cannot be guaranteed for all real-world scenarios, in our simulated setting -- with its controlled random generation process 
on a known and fixed planted shape $S$ 
-- we can satisfy that the optimal region $S^*$ is $(\eps,\gamma)$-stable for some choice of $\eps$ and $\gamma$; at least with a large enough sample size and $pq$ difference.  Thus, for our theoretical analysis, we assume to be seeking an $(\eps,\gamma)$-stable shape; we will show that we can achieve $\eps$-approximation in the cost $\phi$, and this will imply we also obtain $\gamma$-approximation in the Jaccard distance.  

To ensure that the recovered region $(\hat S)$ is close to the true optimal region $(S^*)$, we require that the difference in their scan statistic score is at most $\eps$, that is
\[
|\phi(\hat{S}) - \phi(S^*)| \leq \eps.  
\]
In other words, we want the discovered region $\hat S$ to approximate the optimal region $S^*$, within a small margin of error in terms of the score function $\phi$.  
To apply this in our setting, we need one additional concept, since 
the regions are imbalanced in size.   Hence,  
we may need more points from the larger regions (e.g., more samples from Texas than Delaware in a US state dataset).  
To allow for analysis of our proposed algorithm, we consider an input of a set of regions $\mathcal{R}$ to be \emph{$\kappa$-balanced} if $\max_{R \in \mathcal{R}} m(R) \leq \kappa \min_{R \in \mathcal{R}} m(R)$ and $\max_{R \in \mathcal{R}} b(R) \leq \kappa \min_{R \in \mathcal{R}} b(R)$. 
We can now state the following.
 
\begin{theorem}
\label{thm:geomk}
    Consider an $(\eps,\gamma)$-stable anomalous shape $S^*$ in a $\kappa$-balanced set of $n$ regions.  If we use \textsf{Geom} $k$ to sample $k = O(\kappa / (n \eps^2))$ uniformly distributed points from each region and then run a scan statistics algorithm to obtain a shape $\hat S$, then with a constant probability we can guarantee $d_{\textsc{Jac}}(S^*,\hat S) \leq \gamma$.  
\end{theorem}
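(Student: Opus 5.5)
The plan is to reduce the Jaccard guarantee to a uniform approximation of the score $\phi$ and then bound that approximation with an $\eps$-sample (VC-dimension) argument. The ``true'' quantities are the region-level fractions. For a shape $S$, let $\bar m(S)=\frac1M\sum_{z} m(z)\,\frac{\mathrm{area}(z\cap S)}{\mathrm{area}(z)}$, and define $\bar b(S)$ analogously. These are the values an infinitely dense uniform sampling would produce, and we take $\phi(S)=\phi(\bar m(S),\bar b(S))$ as the objective whose optimizer is $S^*$. The \textsf{Geom} $k$ conversion instead produces estimates $\hat m(S)$ and $\hat b(S)$, each obtained by giving weight $\frac{m(z)}{kM}$ to each of the $k$ samples in region $z$. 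The first step is to show that, with constant probability,
\[
\sup_{S\in\SS}\max\bigl(|\hat m(S)-\bar m(S)|,\,|\hat b(S)-\bar b(S)|\bigr)\le \eps' .
\]

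For this uniform deviation bound, view the $nk$ samples as independent, non-identically distributed points. Each carries weight at most $\frac{\kappa}{nk}$ after normalization, because $\kappa$-balance gives $\max_z m(z)\le \kappa M/n$, and the same holds for $b$. The range space of rectangles (or disks) has constant VC dimension, so a symmetrization/chaining argument gives
\[
\mathsf{E}\sup_S|\hat m(S)-\bar m(S)| = O\!\left(\sqrt{\textstyle\sum_x w_x^2}\right)=O\!\left(\sqrt{\kappa/(n^2k)}\right).
\]
Here I use $\sum_x w_x^2\le \max_x w_x\cdot\sum_x w_x\le \frac{\kappa}{nk}$; the second factor of $n$ comes from the balance condition sharpened via $\sum_z m(z)^2\le \kappa M^2/n$. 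Setting this below $\eps'$ yields $k=O(\kappa/(n\eps'^2))$, and Markov's inequality, or a bounded-differences concentration, converts the expectation bound to constant probability.

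The second step transfers the fraction error to the score. I restrict to shapes with $\bar b(S)$ bounded away from $0$ and $1$; this is a standard restriction, and in our planted setting the optimum $S^*$ satisfies it. On this domain $\phi$ is Lipschitz in $(m,b)$ with a constant $L$ that depends only on the restriction. Hence $|\hat\phi(S)-\phi(S)|\le 2L\eps'$ uniformly, and choosing $\eps'=\eps/(4L)$ gives error at most $\eps/2$. Let $\hat S$ be the shape returned by maximizing $\hat\phi$; if the scan algorithm is itself approximate, its additive error is absorbed into the constants. Then
\[
\phi(\hat S)\ge \hat\phi(\hat S)-\eps/2\ge \hat\phi(S^*)-\eps/2\ge \phi(S^*)-\eps .
\]
A strict margin can be arranged by slightly shrinking the constants. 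By $(\eps,\gamma)$-stability, $d_{\textsc{Jac}}(S^*,\hat S)\le\gamma$, which is the claim.

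I expect the main obstacle to be the uniform concentration step with heterogeneous weights. A plain $\eps$-sample theorem assumes i.i.d. points with equal weight, but here each region contributes $k$ points with weight proportional to $m(z)$ or $b(z)$. I would handle this by bounding the weighted empirical process through symmetrization with Rademacher variables, followed by the Sauer--Shelah bound on the number of distinct shapes. This is exactly where $\kappa$-balance enters, because it controls $\sum w_x^2$. The Lipschitz restriction on $\phi$ near its boundary is the secondary technical issue.
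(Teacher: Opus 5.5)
Your argument is correct in outline, and it takes a genuinely different route from the paper.

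\textbf{How the two routes differ.} The paper never analyzes the \textsf{Geom} $k$ estimator directly. It models the input as a continuous measure (uniform on each region, with mass $m(z)$ or $b(z)$). It cites \citep{SSSS,MP18} for the fact that $O(1/\eps^2)$ i.i.d.\ samples from such a measure give an $\eps$-approximate maximizer of $\phi$ with constant probability; the transfer from fraction error to score error comes from that work. It then uses an informal counting argument: $nk=O(1/\eps^2)$ total points suffice for equal regions, and the extra factor $\kappa$ ensures the heaviest region still gets its share.

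You instead prove a uniform deviation bound for the stratified, unequally weighted sample that \textsf{Geom} $k$ actually produces, using symmetrization for independent but non-identically distributed points. You also make the Lipschitz transfer and the stability step explicit. This gives rigor exactly where the paper is informal: a fixed-$k$-per-region sample with weights $m(z)/(kM)$ is not an i.i.d.\ sample from the mixture. It also gives a slightly sharper statement. Your bound only uses $\sum_x w_x^2=\frac{1}{kM^2}\sum_z m(z)^2$, so $\kappa/n$ can be replaced by $\sum_z (m(z)/M)^2$ (and likewise for $b$). The paper's version is shorter and reuses an off-the-shelf theorem.

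\textbf{Details that need repair.}

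\emph{The displayed rate is wrong.} In fact $\sum_x w_x^2=\frac{1}{kM^2}\sum_z m(z)^2\le \frac{\kappa}{nk}$, with no second factor of $n$; for equal regions it is exactly $1/(nk)$. The correct bound $O(\sqrt{\kappa/(nk)})$ is precisely what gives $k=O(\kappa/(n\eps'^2))$. The $\sqrt{\kappa/(n^2k)}$ you wrote would have given $k=O(\kappa/(n^2\eps'^2))$, which contradicts your own conclusion.

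\emph{The log factor.} Symmetrization followed by Sauer--Shelah and Massart's lemma only gives $O(\sqrt{\log(nk)\sum_x w_x^2})$. To get the log-free rate, use Dudley chaining with Haussler's packing bound in $L_2(\nu)$, where $\nu(x)\propto w_x^2$.

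\emph{The Lipschitz step.} Bounding $\bar b(S)$ away from $0$ and $1$ does not by itself make $\phi$ Lipschitz, because $\partial_m\phi=\log\frac{m(1-b)}{b(1-m)}$ is unbounded as $m\to 0$ or $m\to 1$. Here $\kappa$-balance rescues you. Since $\frac{m(z)/M}{b(z)/B}\ge\kappa^{-2}$, you get $\bar m(S)\ge \bar b(S)/\kappa^2$ and $1-\bar m(S)\ge (1-\bar b(S))/\kappa^2$. The same holds for $\hat m,\hat b$, which are built on the same points. So $L$ depends on $\kappa$ as well as on the restriction; alternatively, simply restrict $m$ too.

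\emph{How the restriction is enforced.} The scan can only enforce the restriction through $\hat b$. State the filter that way, and use the uniform bound to show that $S^*$ passes it.

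\emph{No strict margin is needed.} The contrapositive of stability already turns $\phi(S^*)-\phi(\hat S)\le\eps$ into $d_{\textsc{Jac}}(S^*,\hat S)\le\gamma$.
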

\begin{proof}
We rely on 
a probabilistic guarantee established in prior work~\citep{SSSS,MP18} that we can ensure an accurate estimate of $\phi$ under sampling.  
%
%
%
In particular, if we consider a random sample $X' \subset X$ of size $|X'| = O(1/\eps^2)$, then with constant probability, the optimal scan statistic computed on $X'$ will have a solution $\hat S = S'^*$ so that $|\phi(S'^*) - \phi(S^*) | \leq \eps$, where $S^*$ is the optimal solution of the full dataset $X$.   

To apply these results in our settings, we treat the input $X$ as a continuous distribution -- in particular, a uniform measure over each input region.  In our application, $X$ represents either the measured value $m
$ or the baseline value $b$; where the density assigned to each region $z$ is given by $m(z)$ or $b(z)$, respectively.  
There is nothing in the above results~\citep{SSSS,MP18} that prevents $X$ from being a continuous distribution, as long as we can draw random samples from it.  Since we can draw uniformly from each shape file region, this modeling assumption is justified.   
The remaining challenge is that the required $O(1/\eps^2)$ samples must be drawn from the full dataset, proportionally to the baseline $b$ (and likewise $m$). This means that if there are $n$ regions, we require $kn = O(1/\eps^2)$ total samples, $k = O(1/(n \eps^2))$ points -- indicating as we subdivide into a larger number of regions $n$, the total number of samples stays fixed, and thus the number of samples needed per region decreases.  

However, this $k = O(1/(n \eps^2))$ bound assumes the regions are comparable in size.  
However, since we assume the regions are $\kappa$-balanced, then, if we sample $k = O(\kappa / (n \eps^2))$ points per region, it will over-sample from some smaller regions, but guarantee to obtain the equivalent of the requisite (at most $\kappa / \eps^2$) number of samples even in the largest regions.  

Piecing together these components and transferring $\eps$-error on $\phi$ to $\gamma$-Jaccard error via the $(\eps,\gamma)$-stable assumption completes the proof.  
\end{proof}

Perhaps surprisingly, this analysis indicates that as the number of regions $n$ increases, then the number of samples per region $k$ (the parameter in Geom $k$), \emph{decreases}, proportional to $(\kappa/\eps^2)/n$.  
To make sense of this, if the area of study (say all of the continental USA) is fixed, but the size of each region decreases (e.g., from states to counties to zip codes), then it is already gaining more resolution.  Thus to cover the same area, there are fewer points per region $k$ needed to allow the scanning to obtain the same refinement.

\subsection{Choice of $k$}
\label{sec:k-choice}

We further empirically explore the effect of the choice of $k$ (uniform samples per region) and the recommendation of $k \in [20, 50]$ as effective and efficient.   
We sweep $k$ from $2$ to $100$ on all six datasets in Figure~\ref{fig:k-sweep}.  
So a state with $n$ regions contributes $kn$ total points to the scan input.  We fix the rate contrast at the $pq$ difference $= 0.15$ and report the Point Jaccard distance averaged over $60$ trials.  

For most datasets Point Jaccard distance drops sharply until $k \approx 20$, after which returns diminish, although the large $n=3{,}108$ USA dataset is already below the $0.2$ recovery threshold by $k=3$.

\begin{figure}[b]
    \includegraphics[width=0.85\linewidth]{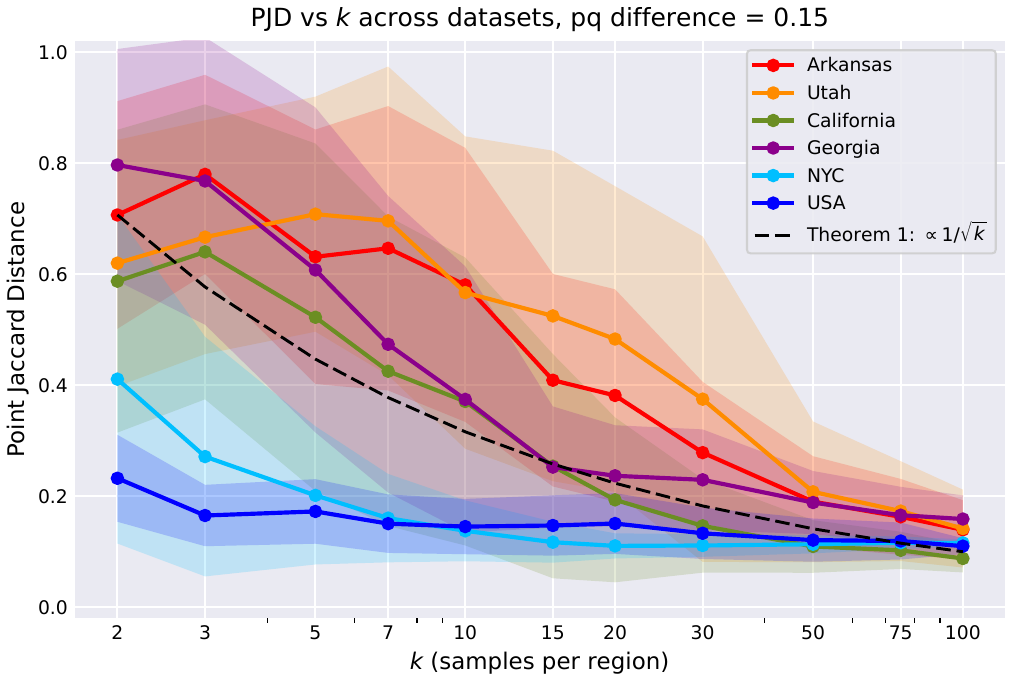}
    \Description{A line plot of Point Jaccard distance (y-axis, 0 to 1)
versus k samples per region (x-axis on a log scale, with tick labels 2,
3, 5, 7, 10, 15, 20, 30, 50, 75, 100). Six colored curves are shown,
one per dataset: Arkansas, Utah, California, Georgia, NYC, and USA,
each with shaded standard-deviation bands. All six curves generally
decrease as k grows. NYC, Georgia, and the U.S. dataset level off after
approximately k equal to 20, while Arkansas, Utah, and California
continue decreasing through k equal to 100; Utah shows the largest
improvement after k equal to 20. A black dashed reference line, labeled in the legend as proportional to 1 over square root of k, follows the same
decreasing shape and overlays the dataset curves past k roughly equal
to 10.}

    \vspace{-3mm}
     \caption{Point Jaccard distance as a function of $k$ on Geom $k$, at fixed $pq$ difference of $0.15$, across six datasets. 
 Curves show means over $60$ trials; shaded
    bands show $\pm1$ std.dev.}
    \label{fig:k-sweep}
\end{figure}
  


Note that Theorem \ref{thm:geomk}  gives that $k = (\kappa/n)(1/\eps^2)$ samples are sufficient to achieve $\eps$ error (for a fixed dataset, with $\kappa, n$ fixed).  
That means error ($\eps$) drops proportional to $1/\sqrt{k}$.  
This rate is also plotted in Figure \ref{fig:k-sweep} as dashed black, pegged to the Arkansas dataset at $k=2$.  
We observe that this general shape roughly applies to the various datasets; however the constants differ a bit as does $n$; with the large $n$ datasets (USA and NYC) requiring fewer samples, as expected.  Moreover, this does not converge to zero: by $k=100$ the curves flatten between about $0.09$ and $0.16$.  
There is other variation that is harder to pinpoint such as that captured by $\kappa$.  


\section{Real Example: California Valley Fever}
\label{app:real}

To validate the effect on real data, we run on county-level Valley
Fever (coccidioidomycosis) incidence in California, whose elevated region has a
known cause independent of the case counts. The fungal pathogen
\emph{Coccidioides} lives in arid soils, and its California endemic zone of 
San Joaquin Valley (SJV), centered on Kern County, and mapped from soil ecology~\cite{cdc-cocci,cdph-cocci}. We set $m(z)$ to confirmed cases
and $b(z)$ to population per county (CDPH, 2014--2018~\cite{chhs-idb}), take the
ground truth $S^\star$ to be the eight SJV counties, run the Kulldorff scan over
axis-aligned rectangles, and score each discovered region by its Point Jaccard
distance to $S^\star$. 


\begin{figure}[t]
  \centering
  \includegraphics[width=1.03\linewidth]{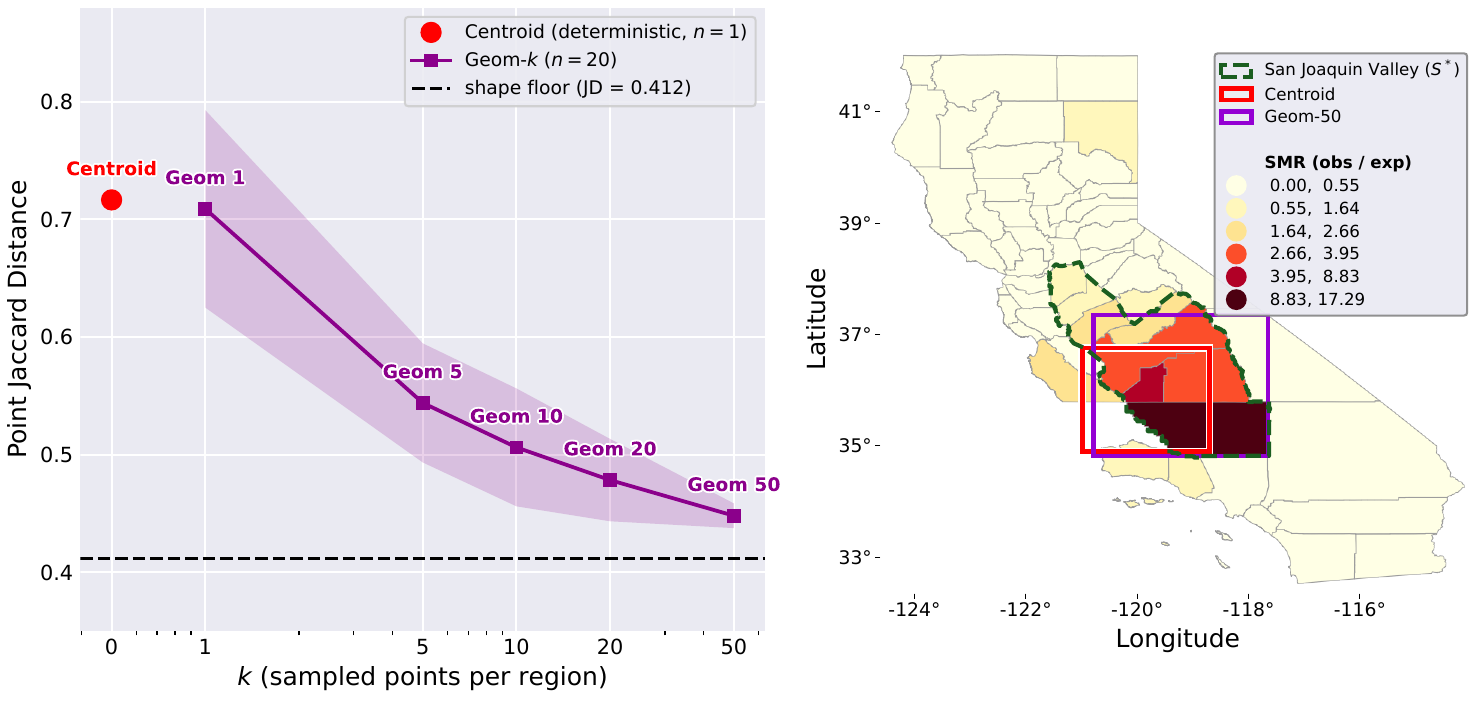}
  \Description{Left: Point Jaccard distance to the San Joaquin Valley reference region, for Centroid and for Geom-k as k increases from 1 to 50, with a dashed horizontal line marking the best-fit rectangle benchmark. Right: a choropleth map of California counties shaded by standardized morbidity ratio, with the San Joaquin Valley reference counties outlined and the rectangles returned by Centroid and Geom-50 drawn on top.}

  \vspace{-3mm}
  \caption{California Valley Fever: Point Jaccard distance to the San Joaquin
  Valley ground truth. The centroid ($k{=}0$, red) is shown
  against Geom-$k$ for $k\in\{1,5,10,20,50\}$ (purple, mean $\pm 1$ std band),
  shape floor (dashed). Right shows
  standardized morbidity ratio (SMR, observed/expected cases) by county, with
  the SJV ground truth $S^\star$ outlined (green dashed) and
  the regions found by Centroid and Geom-50.}
  \label{fig:vf-curve}
\end{figure}


The centroid attains $\mathrm{PJD}=0.717$; Geom-$k$ improves monotonically to
$0.448$ at $k=50$, a gap of $+0.268$ (Figure~\ref{fig:vf-curve}). This lands only
$0.036$ above the shape floor of $0.412$: the best overlap an
axis-aligned rectangle can achieve, since the diagonally oriented valley cannot
be wrapped tightly.  In contrast, the centroid is stranded $0.305$ above it. The
discovered region sits inside the SJV with standardized morbidity ratio
(observed/expected cases) $\approx 8$ at every $k$,
and the per-trial spread
narrows as $k$ grows.
The valley's
few very large counties (notably Kern) carry most of the case mass; collapsing
each to its centroid destroys the information needed to \emph{size} the region,
while sampling restores it.

The improvement is robust to both natural design choices. 
Tightening $S^\star$ to the five hyperendemic core counties widens it to $+0.285$ (floor $0.294$).  
Varying the window across a pre-surge period
(2011--2013), the headline period (2014--2018), and the peak-outbreak year (2017) keeps it between $+0.243$ and $+0.268$, despite a $1.8\times$ swing in annual case
load. The separation therefore reflects region geometry, not case volume.

\section{Conclusion}
\label{sec:conclusion}

We introduce a simple yet effective method for converting region-based inputs to point-based representation, enabling efficient algorithms for spatial scan statistics.  Our approach,  sampling multiple points uniformly within each region, significantly improves statistical power compared to standard techniques that rely solely on region centroids. Through planted-region experiments, we demonstrated that our method can recover anomalous regions under more subtle effect sizes while remaining computationally tractable. 

Leveraging efficient implementation such as pyScan, we showed that even with increased point counts (e.g., Geom 50), runtime remains practical across both small and large datasets. This supports the scalability of our method without sacrificing precision or speed.
  
Given its ease of implementation, superior performance, and compatibility with existing algorithms, we recommend this sampling-based approach as the default way to convert region-based input to point-based representations for spatial scan statistics.  





\clearpage 
\bibliographystyle{ACM-Reference-Format}
\bibliography{references}


\begin{thebibliography}{28}


\ifx \showCODEN    \undefined \def \showCODEN     #1{\unskip}     \fi
\ifx \showDOI      \undefined \def \showDOI       #1{#1}\fi
\ifx \showISBNx    \undefined \def \showISBNx     #1{\unskip}     \fi
\ifx \showISBNxiii \undefined \def \showISBNxiii  #1{\unskip}     \fi
\ifx \showISSN     \undefined \def \showISSN      #1{\unskip}     \fi
\ifx \showLCCN     \undefined \def \showLCCN      #1{\unskip}     \fi
\ifx \shownote     \undefined \def \shownote      #1{#1}          \fi
\ifx \showarticletitle \undefined \def \showarticletitle #1{#1}   \fi
\ifx \showURL      \undefined \def \showURL       {\relax}        \fi
\providecommand\bibfield[2]{#2}
\providecommand\bibinfo[2]{#2}
\providecommand\natexlab[1]{#1}
\providecommand\showeprint[2][]{arXiv:#2}

\bibitem[Abolhassani and Prates(2021)]%
        {abolhassani2021up}
\bibfield{author}{\bibinfo{person}{Ali Abolhassani} {and}
  \bibinfo{person}{Marcos~O Prates}.} \bibinfo{year}{2021}\natexlab{}.
\newblock \showarticletitle{An up-to-date review of scan statistics}.
\newblock \bibinfo{journal}{\emph{Statistic Surveys}}  \bibinfo{volume}{15}
  (\bibinfo{year}{2021}), \bibinfo{pages}{111--153}.
\newblock


\bibitem[Agarwal et~al\mbox{.}(2006a)]%
        {agarwal2006spatial}
\bibfield{author}{\bibinfo{person}{Deepak Agarwal}, \bibinfo{person}{Andrew
  McGregor}, \bibinfo{person}{Jeff~M Phillips}, \bibinfo{person}{Suresh
  Venkatasubramanian}, {and} \bibinfo{person}{Zhengyuan Zhu}.}
  \bibinfo{year}{2006}\natexlab{a}.
\newblock \showarticletitle{Spatial scan statistics: approximations and
  performance study}. In \bibinfo{booktitle}{\emph{Proceedings of the 12th ACM
  SIGKDD international conference on Knowledge discovery and data mining}}.
  \bibinfo{pages}{24--33}.
\newblock


\bibitem[Agarwal et~al\mbox{.}(2006b)]%
        {APV06}
\bibfield{author}{\bibinfo{person}{Deepak Agarwal}, \bibinfo{person}{Jeff~M.
  Phillips}, {and} \bibinfo{person}{Suresh Venkatasubramanian}.}
  \bibinfo{year}{2006}\natexlab{b}.
\newblock \showarticletitle{The Hunting of the Bump: On Maximizing Statistical
  Discrepancy}.
\newblock \bibinfo{journal}{\emph{SODA}} (\bibinfo{year}{2006}),
  \bibinfo{pages}{1137--1146}.
\newblock


\bibitem[Backurs et~al\mbox{.}(2016)]%
        {backurs2016tight}
\bibfield{author}{\bibinfo{person}{Arturs Backurs}, \bibinfo{person}{Nishanth
  Dikkala}, {and} \bibinfo{person}{Christos Tzamos}.}
  \bibinfo{year}{2016}\natexlab{}.
\newblock \showarticletitle{Tight Hardness Results for Maximum Weight
  Rectangles}. In \bibinfo{booktitle}{\emph{43rd International Colloquium on
  Automata, Languages, and Programming (ICALP 2016)}},
  Vol.~\bibinfo{volume}{55}. \bibinfo{publisher}{Schloss Dagstuhl},
  \bibinfo{address}{Dagstuhl, Germany}, \bibinfo{pages}{81}.
\newblock


\bibitem[Buchin et~al\mbox{.}(2012)]%
        {buchin2012}
\bibfield{author}{\bibinfo{person}{Kevin Buchin}, \bibinfo{person}{Maike
  Buchin}, \bibinfo{person}{Marc van Kreveld}, \bibinfo{person}{Maarten
  L{\"o}ffler}, \bibinfo{person}{Jun Luo}, {and} \bibinfo{person}{Rodrigo~I.
  Silveira}.} \bibinfo{year}{2012}\natexlab{}.
\newblock \showarticletitle{Processing aggregated data: the location of
  clusters in health data}.
\newblock \bibinfo{journal}{\emph{GeoInformatica}} \bibinfo{volume}{16},
  \bibinfo{number}{3} (\bibinfo{year}{2012}), \bibinfo{pages}{497--521}.
\newblock
\urldef\tempurl%
\url{https://doi.org/10.1007/s10707-011-0143-6}
\showDOI{\tempurl}


\bibitem[{California Department of Public Health}(2024a)]%
        {cdph-cocci}
\bibfield{author}{\bibinfo{person}{{California Department of Public Health}}.}
  \bibinfo{year}{2024}\natexlab{a}.
\newblock \bibinfo{title}{Coccidioidomycosis ({Valley Fever})}.
\newblock
  \bibinfo{howpublished}{\url{https://www.cdph.ca.gov/Programs/CID/DCDC/Pages/Coccidioidomycosis.aspx}}.
\newblock
\newblock
\shownote{Accessed 2026-06-05}.


\bibitem[{California Department of Public Health}(2024b)]%
        {chhs-idb}
\bibfield{author}{\bibinfo{person}{{California Department of Public Health}}.}
  \bibinfo{year}{2024}\natexlab{b}.
\newblock \bibinfo{title}{Infectious Diseases by Disease, County, Year, and
  Sex}.
\newblock \bibinfo{howpublished}{California Health and Human Services Open Data
  Portal.
  \url{https://data.chhs.ca.gov/dataset/03e61434-7db8-4a53-a3e2-1d4d36d6848d}}.
\newblock
\newblock
\shownote{Accessed 2026-06-05}.


\bibitem[{Centers for Disease Control and Prevention}(2024)]%
        {cdc-cocci}
\bibfield{author}{\bibinfo{person}{{Centers for Disease Control and
  Prevention}}.} \bibinfo{year}{2024}\natexlab{}.
\newblock \bibinfo{title}{Valley Fever ({Coccidioidomycosis}): Areas Where It
  Lives}.
\newblock \bibinfo{howpublished}{\url{https://www.cdc.gov/valley-fever/}}.
\newblock
\newblock
\shownote{Accessed 2026-06-05}.


\bibitem[Desjardins et~al\mbox{.}(2020)]%
        {desjardins2020rapid}
\bibfield{author}{\bibinfo{person}{Michael~R Desjardins},
  \bibinfo{person}{Alexander Hohl}, {and} \bibinfo{person}{Eric~M Delmelle}.}
  \bibinfo{year}{2020}\natexlab{}.
\newblock \showarticletitle{Rapid surveillance of COVID-19 in the United States
  using a prospective space-time scan statistic: Detecting and evaluating
  emerging clusters}.
\newblock \bibinfo{journal}{\emph{Applied geography}}  \bibinfo{volume}{118}
  (\bibinfo{year}{2020}), \bibinfo{pages}{102202}.
\newblock


\bibitem[Glaz and Koutras(2024)]%
        {glaz2024handbook}
\bibfield{author}{\bibinfo{person}{Joseph Glaz} {and} \bibinfo{person}{Markos~V
  Koutras}.} \bibinfo{year}{2024}\natexlab{}.
\newblock \bibinfo{booktitle}{\emph{Handbook of scan statistics}}.
\newblock \bibinfo{publisher}{Springer}.
\newblock


\bibitem[Grubesic et~al\mbox{.}(2014)]%
        {grubesic}
\bibfield{author}{\bibinfo{person}{Tony~H Grubesic}, \bibinfo{person}{Ran Wei},
  {and} \bibinfo{person}{Alan~T Murray}.} \bibinfo{year}{2014}\natexlab{}.
\newblock \showarticletitle{Spatial clustering overview and comparison:
  Accuracy, sensitivity, and computational expense}.
\newblock \bibinfo{journal}{\emph{Annals of the Association of American
  Geographers}} \bibinfo{volume}{104}, \bibinfo{number}{6}
  (\bibinfo{year}{2014}), \bibinfo{pages}{1134--1156}.
\newblock


\bibitem[Han et~al\mbox{.}(2019)]%
        {han2019kernel}
\bibfield{author}{\bibinfo{person}{Mingxuan Han}, \bibinfo{person}{Michael
  Matheny}, {and} \bibinfo{person}{Jeff~M Phillips}.}
  \bibinfo{year}{2019}\natexlab{}.
\newblock \showarticletitle{The kernel spatial scan statistic}. In
  \bibinfo{booktitle}{\emph{Proceedings of the 27th ACM SIGSPATIAL
  International Conference on Advances in Geographic Information Systems}}.
  \bibinfo{publisher}{ACM}, \bibinfo{address}{New York, NY, USA},
  \bibinfo{pages}{349--358}.
\newblock


\bibitem[Kulldorff(1997)]%
        {kulldorff1997spatial}
\bibfield{author}{\bibinfo{person}{Martin Kulldorff}.}
  \bibinfo{year}{1997}\natexlab{}.
\newblock \showarticletitle{A spatial scan statistic}.
\newblock \bibinfo{journal}{\emph{Communications in Statistics-Theory and
  methods}} \bibinfo{volume}{26}, \bibinfo{number}{6} (\bibinfo{year}{1997}),
  \bibinfo{pages}{1481--1496}.
\newblock


\bibitem[Kulldorff(2022)]%
        {Kul7.0}
\bibfield{author}{\bibinfo{person}{Martin Kulldorff}.}
  \bibinfo{year}{2022}\natexlab{}.
\newblock \bibinfo{booktitle}{\emph{SatScan User Guide}
  (\bibinfo{edition}{10.1} ed.)}.
\newblock http://www.satscan.org/.
\newblock


\bibitem[Kulldorff et~al\mbox{.}(2006)]%
        {Kulldorff2006}
\bibfield{author}{\bibinfo{person}{Martin Kulldorff}, \bibinfo{person}{Lan
  Huang}, \bibinfo{person}{Linda Pickle}, {and} \bibinfo{person}{Luiz
  Duczmal}.} \bibinfo{year}{2006}\natexlab{}.
\newblock \showarticletitle{An elliptic spatial scan statistic.}
\newblock \bibinfo{journal}{\emph{Statistics in medicine}}
  \bibinfo{volume}{25}, \bibinfo{number}{22} (\bibinfo{year}{2006}),
  \bibinfo{pages}{3929--43}.
\newblock


\bibitem[Matheny(2024)]%
        {pyScan}
\bibfield{author}{\bibinfo{person}{Michael Matheny}.}
  \bibinfo{year}{2024}\natexlab{}.
\newblock \bibinfo{booktitle}{\emph{{pyScan}}}.
\newblock https://github.com/michaelmathen/pyscan.
\newblock


\bibitem[Matheny and Phillips(2018)]%
        {MP18}
\bibfield{author}{\bibinfo{person}{Michael Matheny} {and}
  \bibinfo{person}{Jeff~M. Phillips}.} \bibinfo{year}{2018}\natexlab{}.
\newblock \showarticletitle{Computing Approximate Statistical Discrepancy}.
\newblock \bibinfo{journal}{\emph{ISAAC}}  \bibinfo{volume}{123}
  (\bibinfo{year}{2018}), \bibinfo{pages}{7:1--7:14}.
\newblock


\bibitem[Matheny et~al\mbox{.}(2016)]%
        {SSSS}
\bibfield{author}{\bibinfo{person}{Michael Matheny},
  \bibinfo{person}{Raghvendra Singh}, \bibinfo{person}{Liang Zhang},
  \bibinfo{person}{Kaiqiang Wang}, {and} \bibinfo{person}{Jeff~M. Phillips}.}
  \bibinfo{year}{2016}\natexlab{}.
\newblock \showarticletitle{Scalable Spatial Scan Statistics Through Sampling}.
  In \bibinfo{booktitle}{\emph{SIGSPATIAL}}. \bibinfo{publisher}{ACM},
  \bibinfo{address}{New York, NY, USA}, \bibinfo{pages}{1--10}.
\newblock


\bibitem[Neill and Moore(2004)]%
        {NM04}
\bibfield{author}{\bibinfo{person}{Daniel~B. Neill} {and}
  \bibinfo{person}{Andrew~W. Moore}.} \bibinfo{year}{2004}\natexlab{}.
\newblock \showarticletitle{Rapid Detection of Significant Spatial Clusters}.
  In \bibinfo{booktitle}{\emph{KDD}}. \bibinfo{publisher}{ACM},
  \bibinfo{address}{New York, NY, USA}, \bibinfo{pages}{256--265}.
\newblock


\bibitem[Neill et~al\mbox{.}(2006)]%
        {NMC06}
\bibfield{author}{\bibinfo{person}{Daniel~B. Neill}, \bibinfo{person}{Andrew~W.
  Moore}, {and} \bibinfo{person}{Gregory~F. Cooper}.}
  \bibinfo{year}{2006}\natexlab{}.
\newblock \showarticletitle{A {B}ayesian Spatial Scan Statistic}. In
  \bibinfo{booktitle}{\emph{NIPS}}. \bibinfo{publisher}{MIT Press},
  \bibinfo{address}{Cambridge, MA, USA}, \bibinfo{pages}{1003--1010}.
\newblock


\bibitem[Nobles et~al\mbox{.}(2022)]%
        {nobles2022presyndromic}
\bibfield{author}{\bibinfo{person}{Mallory Nobles}, \bibinfo{person}{Ramona
  Lall}, \bibinfo{person}{Robert~W Mathes}, {and} \bibinfo{person}{Daniel~B
  Neill}.} \bibinfo{year}{2022}\natexlab{}.
\newblock \showarticletitle{Presyndromic surveillance for improved detection of
  emerging public health threats}.
\newblock \bibinfo{journal}{\emph{Science Advances}} \bibinfo{volume}{8},
  \bibinfo{number}{44} (\bibinfo{year}{2022}), \bibinfo{pages}{eabm4920}.
\newblock


\bibitem[Patil and Taillie(2004)]%
        {patil2004upper}
\bibfield{author}{\bibinfo{person}{Ganapati~P Patil} {and}
  \bibinfo{person}{Charles Taillie}.} \bibinfo{year}{2004}\natexlab{}.
\newblock \showarticletitle{Upper level set scan statistic for detecting
  arbitrarily shaped hotspots}.
\newblock \bibinfo{journal}{\emph{Environmental and Ecological statistics}}
  \bibinfo{volume}{11} (\bibinfo{year}{2004}), \bibinfo{pages}{183--197}.
\newblock


\bibitem[Phillips(2017)]%
        {phillips2017coresets}
\bibfield{author}{\bibinfo{person}{Jeff~M Phillips}.}
  \bibinfo{year}{2017}\natexlab{}.
\newblock \showarticletitle{Coresets and sketches}.
\newblock In \bibinfo{booktitle}{\emph{Handbook of discrete and computational
  geometry}}. \bibinfo{publisher}{Chapman and Hall/CRC},
  \bibinfo{pages}{1269--1288}.
\newblock


\bibitem[Shiode(2011)]%
        {shiode2011street}
\bibfield{author}{\bibinfo{person}{Shino Shiode}.}
  \bibinfo{year}{2011}\natexlab{}.
\newblock \showarticletitle{Street-level spatial scan statistic and STAC for
  analysing street crime concentrations}.
\newblock \bibinfo{journal}{\emph{Transactions in GIS}} \bibinfo{volume}{15},
  \bibinfo{number}{3} (\bibinfo{year}{2011}), \bibinfo{pages}{365--383}.
\newblock


\bibitem[Takahashi and Tango(2023)]%
        {FleXScan}
\bibfield{author}{\bibinfo{person}{Tetsuji~Yokoyama Takahashi} {and}
  \bibinfo{person}{Toshiro Tango}.} \bibinfo{year}{2023}\natexlab{}.
\newblock \bibinfo{booktitle}{\emph{{FleXScan}: Software for the Flexible Scan
  Statistics}}.
\newblock https://sites.google.com/site/flexscansoftware/.
\newblock


\bibitem[Tango and Takahashi(2005)]%
        {tango2005flexibly}
\bibfield{author}{\bibinfo{person}{Toshiro Tango} {and}
  \bibinfo{person}{Kunihiko Takahashi}.} \bibinfo{year}{2005}\natexlab{}.
\newblock \showarticletitle{A flexibly shaped spatial scan statistic for
  detecting clusters}.
\newblock \bibinfo{journal}{\emph{International journal of health geographics}}
   \bibinfo{volume}{4} (\bibinfo{year}{2005}), \bibinfo{pages}{1--15}.
\newblock


\bibitem[Tango and Takahashi(2012)]%
        {tango2012restrict}
\bibfield{author}{\bibinfo{person}{Toshiro Tango} {and}
  \bibinfo{person}{Kunihiko Takahashi}.} \bibinfo{year}{2012}\natexlab{}.
\newblock \showarticletitle{A Flexible Spatial Scan Statistic with a Restricted
  Likelihood Ratio for Detecting Clusters}.
\newblock \bibinfo{journal}{\emph{Statistics in Medicine}}
  \bibinfo{volume}{31}, \bibinfo{number}{30} (\bibinfo{year}{2012}),
  \bibinfo{pages}{4207--4218}.
\newblock
\urldef\tempurl%
\url{https://doi.org/10.1002/sim.5478}
\showDOI{\tempurl}


\bibitem[Xie et~al\mbox{.}(2022)]%
        {xie2022statistically}
\bibfield{author}{\bibinfo{person}{Yiqun Xie}, \bibinfo{person}{Shashi
  Shekhar}, {and} \bibinfo{person}{Yan Li}.} \bibinfo{year}{2022}\natexlab{}.
\newblock \showarticletitle{Statistically-robust clustering techniques for
  mapping spatial hotspots: A survey}.
\newblock \bibinfo{journal}{\emph{ACM Computing Surveys (CSUR)}}
  \bibinfo{volume}{55}, \bibinfo{number}{2} (\bibinfo{year}{2022}),
  \bibinfo{pages}{1--38}.
\newblock


\end{thebibliography}

\clearpage



\begin{appendices}

\section{Implementation and pyScan}\label{secA1}
We include these listings as a reproducibility aid: together they reproduce, in about fifty lines of Python, the full \textsf{Geom 50} pipeline used throughout Section~\ref{sec:exp} on any input shapefile, so a reader can replicate our results without reconstructing the \texttt{pyScan} calls from the body of the paper.

We illustrate how to run \texttt{pyScan} on U.S. counties using the Geom 50 setting. First, we load a shapefile of U.S. counties.

\begin{lstlisting}[caption={Python Code to Load Shapefile}, label={lst:shapefile}, language=Python]
import pyscan
import geopandas as gpd
import numpy as np
import random
from shapely.geometry import Point, Polygon
import matplotlib.pyplot as plt

# Load shapefile of US counties
us_df = gpd.read_file("us_county.shp").to_crs("EPSG:4326")

# Check the first few entries
print(us_df.head())
\end{lstlisting}

Next, we define a target rectangle over the state of interest:

\begin{lstlisting}[caption={Defining a Target Rectangle Using Polygon}, label={lst:target_rectangle}, language=Python]
# Define target rectangle
target_rectangle = Polygon([(-100, 33), (-100, 40), (-90, 40), (-90, 33)])
\end{lstlisting}

We generate 50 random points inside each region and assign all of them to the \texttt{baseline} set. Then, we construct the \texttt{measured} set probabilistically: points inside the target rectangle are included with probability $0.4$, and those outside with probability $0.2$.

\begin{lstlisting}[caption={Python Code to Generate 50 Random Points Inside Each Region}, label={lst:geom50-sample}, language=Python]
# Sample 50 points from each region
n_geom = 50
sampled_points = []

for i, row in us_df.iterrows():
    polygon = row['geometry']
    min_x, min_y, max_x, max_y = polygon.bounds
    region_points = []

    while len(region_points) < n_geom:
        # Generate a random point within the bounding box
        random_point = Point([random.uniform(min_x, max_x), random.uniform(min_y, max_y)])
        if random_point.within(polygon):
            region_points.append([random_point.x, random_point.y, i])
    
    sampled_points.append(region_points)

sampled_points = np.vstack(sampled_points)
\end{lstlisting}

\begin{lstlisting}[caption={Assigning Points to Baseline and Measured Sets Using Probabilistic Inclusion}, label={lst:geom50-bm}, language=Python]
# Create baseline and measured sets to test the algorithm using p=0.4 (inside target), q=0.2 (outside)
baseline = []
measured = []

for point in sampled_points:
    prob = random.random()
    # WPoint(weight, x, y, value): defines a weighted point with attribute for pyScan
    baseline.append(pyscan.WPoint(1.0, point[0], point[1], 1.0))
    pt = Point(point[0], point[1])

    if target_rectangle.contains(pt):
        if prob <= 0.4:
            measured.append(pyscan.WPoint(1.0, point[0], point[1], 1.0))
    else:
        if prob <= 0.2:
            measured.append(pyscan.WPoint(1.0, point[0], point[1], 1.0))

\end{lstlisting}

We use the \texttt{Kulldorff} scan statistic (Poisson model) to detect anomalous regions. Then we pass the \texttt{measure} and \texttt{baseline} lists to \texttt{pyscan.Grid()}. We then apply \texttt{pyscan.max\_subgrid()} to find the most anomalous rectangular subregion, stored in the variable \texttt{subgrid}. This rectangle represents the discovered rectangle.

\begin{lstlisting}[caption={Python Code to Run pyScan with Geom 50 Sampling}, label={lst:geom50-rn}, language=Python]
# Run pyScan to find most anomalous rectangle
rect_f = pyscan.KULLDORF
grid = pyscan.Grid(100, measured, baseline)
subgrid = pyscan.max_subgrid(grid, rect_f)
rect = grid.toRectangle(subgrid)
\end{lstlisting}

To visualize the result, we plot the detected rectangle over the sampled points:

\begin{lstlisting}[caption={Python Code to Plot the Detected Rectangle Over Sampled Points}, label={lst:plot_rectangle}, language=Python]
# Plot discovered rectangle
plt.scatter(sampled_points[:, 0], sampled_points[:, 1], s=2)
plt.gca().add_patch(
    plt.Rectangle((rect.lowX(), rect.lowY()),
                  rect.upX() - rect.lowX(),
                  rect.upY() - rect.lowY(),
                  edgecolor='red', facecolor='none')
)
plt.axis('off')
plt.show()
\end{lstlisting}

Full examples and further documentation for \texttt{pyScan} are available at: \\
\url{https://mmath.dev/pyscan}

\section{Disk Cluster Recovery vs.\ Buchin}
\label{app:disk-buchin}

We repeat the head-to-head comparison with the area-based method of Buchin et al.~\citep{buchin2012} using \emph{circular} planted targets.  We plant disks of radius approximately $67$ km and $44$ km, centered in the state, and follow the same protocol as the rectangular comparison in Section~\ref{sec:flexscan-buchin}: $q = 0.2$, Point Jaccard distance averaged over $20$ trials with $\pm 1$ standard deviation bands, Buchin given its standard nine-point size grid, and Centroid included as a lower reference.

\begin{figure}[htbp]
  \centering
\includegraphics[width=\linewidth]{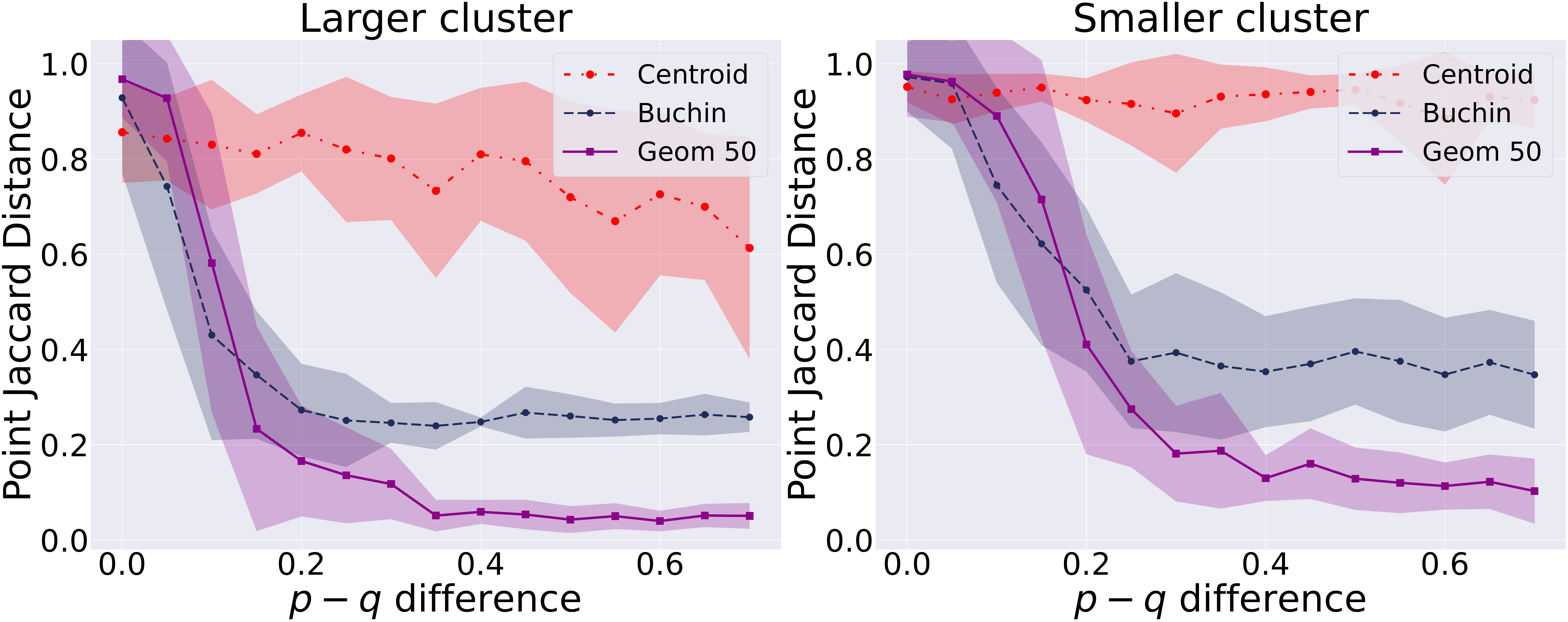}
\Description{Two side-by-side line plots of Jaccard distance (y-axis,
0 to 1) versus signal strength p minus q (x-axis, 0 to 0.7), each with
three curves: Centroid (gray), Buchin disk 9-bin (green), and Geom-50
disk (blue), with shaded standard-deviation bands. The left panel,
labeled larger cluster radius 0.6 degrees, shows the Geom-50 curve
dropping fastest and reaching near 0 by the right end, the Buchin curve
plateauing around 0.2, and the Centroid curve staying high near 0.8
across the whole range. The right panel, labeled smaller cluster radius
0.4 degrees, shows a similar pattern with a wider gap: Geom-50 reaches
near 0, Buchin plateaus around 0.3, and Centroid stays near 0.8.}
  \caption{Circular cluster recovery on Arkansas counties:
  Point Jaccard distance vs.\ $pq$ difference for planted disks of
  radius $\sim\!67$ km (left) and $\sim\!44$ km (right).
  Bands show $\pm 1$ standard deviation over $20$ trials.}
  \label{fig:DiskCurves}
\end{figure}

Figure~\ref{fig:DiskCurves} shows the recovery curves; the pattern mirrors the rectangular case.  On the larger disk the Buchin method is close to the correct radius ($1.17\times$ planted); on the smaller disk it oversizes more ($1.25\times$), where its discrete grid adapts less well. \textsf{Geom 50} recovers the correct radius at both sizes ($1.02\times$ on average), so the separation concentrates on the smaller, harder target. Buchin's residual oversizing occurs under its own nine-point size-selection protocol with the correct scale among the candidates, so it stems from the area-weighted score itself rather than any imposed restriction on the size search.  Centroid degenerates under an unbounded disk search on one point per county, confirming its unsuitability as anything but a lower reference.

Figure~\ref{fig:DiskMapAllMethods} shows the discovered windows geographically.  \textsf{Geom 50} matches (or improves over) the Buchin method's accuracy on easy (large) targets, and visually improves over it on hard (small) targets for both window shapes, and, unlike the area-based method, requires no user-supplied window size.  
Specifically on the large disk, Centroid degenerates to on average $2.17\times$ the planted radius (offset $\sim\!83.8$ km), Buchin
  selects nearly the correct radius at about $1.16\times$ (offset
  $\sim\!7.2$ km), and Geom 50 recovers the disk at $1.01\times$ (offset $\sim\!1.8$ km).  On the small disk, Centroid degenerates to $6.57\times$ planted (offset $\sim\!205.2$ km), Buchin oversizes to $1.25\times$ (offset $\sim\!10.8$ km), and Geom 50 stays at $1.03\times$ (offset $\sim\!3.6$ km).

We note that pyScan's exact disk scan enumerates candidate disks through triples of points and becomes more expensive at a national scale.  
Note that all windows are defined in EPSG:4326 lon/lat degree-Cartesian coordinates; the resulting geographic ellipticity ($1^{\circ}$ lon $\approx 91$ km vs.\ $1^{\circ}$ lat $\approx 111$ km at this latitude) applies uniformly to all methods.  The Buchin method's circular window is rendered as an inscribed regular polygon; our overlap measure treats it as a true circle, a $\sim\!2\%$ area difference in its favor. Centroid runs on an independent random stream; comparisons rely on mean-over-trials equivalence, not paired trials.

\begin{figure}[htbp]
  \includegraphics[width=0.75\linewidth]{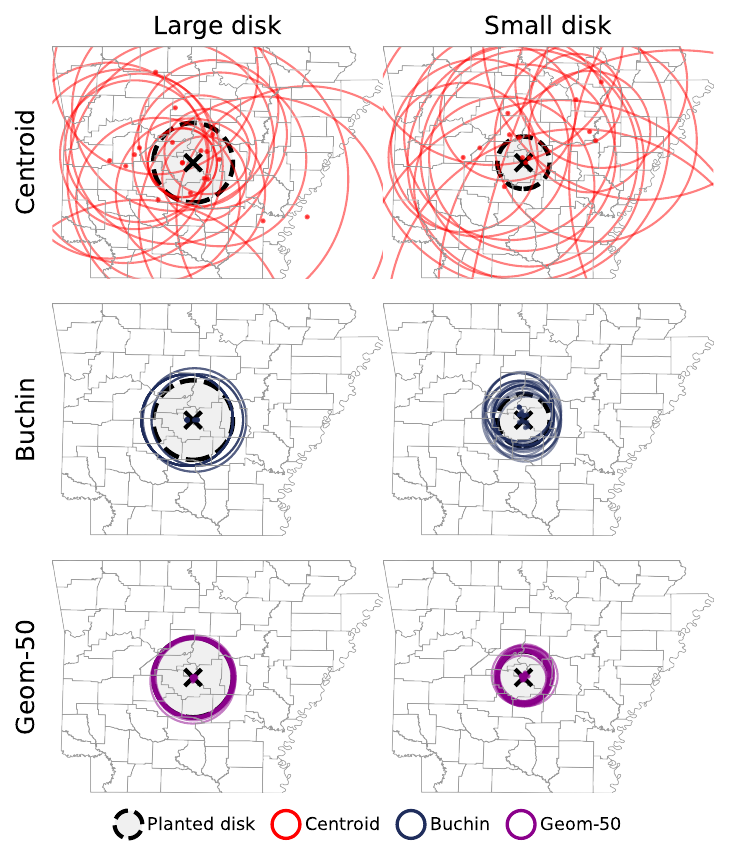}
  \Description{A 3-by-2 grid of maps of Arkansas with county boundaries
drawn in light gray. Each map has a black dashed circle marking the
planted disk target with a small black x at its center. The rows are
labeled Centroid, Buchin, and Geom-50; the columns are labeled Large
disk and Small disk. Each map shows about twenty overlapping circles
in that row's color: red for Centroid, navy for Buchin, magenta for
Geom-50. In the top row (Centroid) the circles spread wildly across
the state at varying sizes on the large disk and are dramatically
larger than the target and scattered on the small disk. In the middle
row (Buchin) the circles nest closely just outside the planted disk on
the large disk and form a thick ring noticeably larger than the
planted disk on the small disk. In the bottom row (Geom-50) the
circles cluster tightly on the planted disk in both columns.}
  \caption{Discovered disk windows on Arkansas for both planted
  disk targets (black dashed) at $p-q = 0.5$, $20$ trials each,
  planted center $(-92.5,\,34.75)$.  
  \textbf{Columns:} large disk
  (radius $\sim\!67$ km) and small disk (radius $\sim\!44$ km).
  }
  \label{fig:DiskMapAllMethods}
\end{figure}

\section{Further Discussion}
\label{sec:discuss}


This paper uses a model where the data in each region $z$ is uniformly mapped to the area in that region through a sample.  However, the human population is certainly not distributed uniformly in spatial regions, and one could think of other ways to distribute the measured and baseline values.  This could make models more realistic in baseline values but perhaps less realistic in measured values.  Our view in this paper is that when data are aggregated to regions, this fine-grained spatial information is lost, and uniform is a reasonable approach, but future work may address this more carefully.  Moreover, Section~\ref{ablation} evaluated a population-weighted alternative 
(``Weighted Sampling''), which did not outperform uniform sampling. This sensitivity check does not isolate population weighting at a fixed point budget and does not model nonuniform population density within counties; both remain directions for future work.


There are also other ways we could assess statistical power.  This could involve how we evaluate similar rectangles, how we distribute our $50 n$ samples, or experiments other than changing the $pq$ difference or the target region size.  We informally explored other ideas beyond the ones reported in this paper, and they either gave results similar to the ones we showed or were far less reliable in measurement.  The paper presents the ones we feel most clearly demonstrate the merits of the main conceptual frameworks.  


\paragraph{Generative AI Statement:}
ChatGPT and Claude were utilized to generate some text suggestions, help with formatting, refine figures, update and compile some code, and facilitate some evaluations.  All results were reviewed, verified, and approved by the authors.

\end{appendices}

\end{document}